\newcolumntype{C}[1]{>{\centering\arraybackslash}m{#1}}
\newcommand{\Tr}{\operatorname{Tr}}
\newcommand{\be}{\begin{equation}}
\newcommand{\ee}{\end{equation}}
\newcommand{\ba}{\begin{eqnarray}}
\newcommand{\ea}{\end{eqnarray}}
\newcommand{\ketbra}[2]{|#1\rangle \langle #2|}
\newcommand{\tr}{\operatorname{Tr}}
\newtheorem{theorem}{Theorem}
\newtheorem{definition}{Definition}
\newtheorem{proposition}{Proposition}
\newtheorem{remark}{Remark}
\def\>{\rangle}
\def\<{\langle}
\begin{document}

\title{Classical analogue of quantum superdense coding and communication advantage of a single quantum system}

\author{Ram Krishna Patra}
\affiliation{Department of Physics of Complex Systems, S.N. Bose National Center for Basic Sciences, Block JD, Sector III, Salt Lake, Kolkata 700106, India.}

\author{Sahil Gopalkrishna Naik}
\affiliation{Department of Physics of Complex Systems, S.N. Bose National Center for Basic Sciences, Block JD, Sector III, Salt Lake, Kolkata 700106, India.}

\author{Edwin Peter Lobo}
\affiliation{Laboratoire d’Information Quantique, Université libre de Bruxelles (ULB), Av. F. D. Roosevelt 50, 1050 Bruxelles, Belgium}

\author{Samrat Sen}
\affiliation{Department of Physics of Complex Systems, S.N. Bose National Center for Basic Sciences, Block JD, Sector III, Salt Lake, Kolkata 700106, India.}

\author{Tamal Guha}
\affiliation{Department of Computer Science, The University of Hong Kong, Pokfulam Road, Hong Kong.}

\author{Some Sankar Bhattacharya}
\affiliation{International Centre for Theory of Quantum Technologies, University of Gdansk, Wita Stwosza 63, 80-308 Gdansk, Poland.}

\author{Mir Alimuddin}
\affiliation{Department of Physics of Complex Systems, S.N. Bose National Center for Basic Sciences, Block JD, Sector III, Salt Lake, Kolkata 700106, India.}

\author{Manik Banik}
\affiliation{Department of Physics of Complex Systems, S.N. Bose National Center for Basic Sciences, Block JD, Sector III, Salt Lake, Kolkata 700106, India.}

\begin{abstract}
We analyze utility of communication channels in absence of any short of quantum or classical correlation shared between the sender and the receiver. To this aim, we propose a class of two-party communication games, and show that the games cannot be won given a noiseless $1$-bit classical channel from the sender to the receiver. Interestingly, the goal can be perfectly achieved if the channel is assisted with  classical shared randomness. This resembles an advantage similar to the quantum superdense coding phenomenon where pre-shared entanglement can enhance the communication utility of a perfect quantum communication line. Quite surprisingly, we show that a qubit communication without any assistance of classical shared randomness can achieve the goal, and hence establishes a novel quantum advantage in the simplest communication scenario. In pursuit of a deeper origin of this advantage, we show that an advantageous quantum strategy must invoke quantum interference both at the encoding step by the sender and at the decoding step by the receiver. We also study communication utility of a class of non-classical toy systems described by symmetric polygonal state spaces. We come up with communication tasks that can be achieved neither with $1$-bit of classical communication nor by communicating a polygon system, whereas $1$-qubit communication yields a perfect strategy, establishing quantum advantage over them. To this end, we show that the quantum advantages are robust against imperfect encodings-decodings, making the protocols implementable with presently available quantum technologies.             
\end{abstract}



\maketitle
\section{Introduction}
Present day digital era crucially depends on reliable transfer of information among distant locations through several advanced means, such as cellular, internet, and satellite communications. Foundation of modern communication theory was established in the seminal work of Claude E. Shannon, who modeled the physical devices to store and transfer the information as classical objects \cite{Shannon48}. The advent of quantum information theory \cite{Nielsen10}, also recognized as 'the second quantum revolution' \cite{Dowling03}, identifies novel uses of non-classical features of quantum systems to devise exotic information and communication protocols that are advantageous over their classical counterparts and in some cases impossible with classical resources \cite{Bennett92,Bennett93,Bennett00,Kimble08,Dale15,Zhang17,Boes18,Rosset18,Ebler18,Korzekwa21,Chiribella21,Bhattacharya21,Koudia21}. Over the last few decades, hands-on uses of quantum resources have been reported in a number of crucial experiments (see \cite{Bouwmeester97,Gisin02,Georgescu14,Degen17} and references therein), and with every passing day, quantum technology is stepping out of merely academic interests to more practical uses \cite{Yin17,Valivarthi20,Xu20}, and thus motivates finding out more-and-more novel communication protocols wherein quantum resources exhibit advantages over their classical counterparts.

The simplest communication scenario involves two distant parties -- a sender (Alice) and a receiver (Bob) -- where Alice aims to transmit some message to Bob by sending some physical systems. The pioneering `quantum superdense coding' protocol establishes a quantum advantage by showing that quantum entanglement, preshared between sender and receiver, can double the classical communication capacity of a perfect qubit channel \cite{Bennett92}. This is quite striking,\footnote{See the illuminating remark by the reviewer of `quantum superdense coding' paper \cite{Mermin04}.} as, in such a scenario, quantum entanglement on its own has no communication utility, and according to the fundamental no-go theorem of Holevo \cite{Holevo73}, the communication capacity of a perfect qubit channel alone is no more than the capacity of a perfect one-bit classical channel. Recently, the no-go implication of Holevo has been strengthened further by Frenkel \& Weiner while evaluating the communication utility of a quantum system in absence of preshared entanglement \cite{Frenkel15}. While in Holevo's theorem, communication utility is measured in terms of the mutual information between the random variable the sender intends to send and the random variable the receiver obtains after the channel action, Frenkel \& Weiner quantify a channel's utility through a generic reward function rather than only mutual information and still establish that the communication utility of an $n$-level quantum system is the same as that of an $n$-level classical system.

The contribution of the present work starts with the following pivotal observation: while quantum entanglement and classical shared randomness both have zero communication utility on their own, the no-go results in \cite{Holevo73,Frenkel15} consider the former to be a costly resource, and hence not allowed to be preshared between the parties, whereas the latter is freely available between them. This assumption is supported by the general practice within the study of Bell's nonlocality \cite{Bell64,Bell66,Brunner14(0)}, wherein classical shared randomness is considered to be free as they result only in 'local' correlations, while quantum entanglement can result in puzzling 'nonlocal' correlations \cite{Wolfe20,Schmid20,Rosset20}. However, it is important to note that to create classical shared randomness between two distant parties, classical communication is a necessary resource, and hence in this work we will consider it to be a costly resource. A number of works in other branches of research  \cite{Aumann87,Babai97,Canonne17} as well as in quantum information theory  \cite{Toner03,Bowles15,Llobet15,Guha21} already exist where nontrivial utilities of classical shared randomness have been pointed out. 

\subsection*{Major findings of the present work}
We investigate the simplest communication scenario involving only two parties. Like Frenkel \& Weiner \cite{Frenkel15}, we consider general reward function(s) to quantify the communication utility of the physical system transferred from the sender to the receiver, but unlike them, we assume that no classical shared randomness is preshared between the parties. Our main findings are the following: 
\begin{itemize}
\item[(i)] First, we propose a class of two-party games denoted as $\mathbb{H}^n(\gamma_1,\cdots,\gamma_n)$, where a game is specified by the parameters $\gamma_i\ge0$ with $\sum_i\gamma_i=1$. The special case where all the $\gamma_i$'s are equal will be designated with the simplified notation $\mathbb{H}^n(1/n)$. We show {in Section} \ref{1bit1SR} that only a measure zero subclass of these games can be won perfectly if $1$-bit of classical communication is allowed from the sender (Alice) to the receiver (Bob). However, all such games can be won deterministically if the classical communication line is assisted by additional shared randomness. Following the quantum superdense coding phenomenon, wherein the single-shot communication utility of a perfect quantum channel is enhanced by the assistance of quantum correlation (entanglement), we call this the `classical superdense coding' phenomenon.

\item[(ii)] We then show in section \ref{quantum strategy} that a class of aforesaid games are perfectly winnable with $1$-qubit communication from Alice to Bob that are otherwise not winnable with $1$-bit of classical communication. This establishes a novel communication utility of quantum system over its classical counterpart. To identify the origin of quantum advantage we further prove two no-go results. We show  the simultaneous use of quantum interference at the encoding step by Alice and at the decoding step by Bob is necessary for this particular advantage.

\item[(iii)] We analyze the communication utility of a class of toy models, known as the polygon models \cite{Janotta11}. Like two-level classical (bit) and quantum (qubit) systems, all these models allow at most two perfectly distinguishable states. However, with the help of our proposed game, we show in Section \ref{polygon} that the communication utility of all the even-gons is strictly greater than the two-level classical system. 

\item[(iv)] We then propose a stricter version of the aforesaid game which we denote as $\mathbb{H}^n(1/(n-1))$. In  section \ref{msrmu} we prove that the $\mathbb{H}^4(1/3)$ game cannot be won perfectly with $1$-bit of classical communication even when assisted with $1$-bit of shared randomness. More shared randomness is required to achieve the goal. This provides further justification to consider shared randomness as a costly resource.  

\item[(v)] Interestingly, we then show that $\mathbb{H}^4(1/3)$ game can be won with 1-qubit communication even without any assistance of classical shared randomness. Furthermore, no polygon system provides a perfect strategy for this game, making the advantage quantum sensitive[\ref{strictq},\ref{qbtp}].

\item[(vi)] We analyze the robustness of the quantum protocols to noise so that this newly obtained communication advantage of the quantum system can be tested with imperfect experimental devices[\ref{noise}]. 
\item[(vii)] Finally, we also explore the power of shared randomness in enhancing a different utility of a 1-bit classical channel. We consider the worst-case success probability of a guessing game as the utility and show that the utility of 1-bit classical channel in the  is $\frac{1}{2}$ in the absence of preshared correlation, whereas with the assistance of $\log 3$ bit of shared randomness this can be enhanced to $\frac{2}{3}$ [\ref{worst-case success}].
 
\end{itemize}
While presenting our results, we discuss several implications and compare them with the already existing results that are relevant. Before moving to the technical proofs of our results, in the following section, we discuss several possible communication scenarios involving two parties which will be helpful to appreciate our contributions in this work.   

\section{Preliminaries: The set up}
While studying two-party communication tasks following three distinct scenarios can be considered:
\begin{itemize}
\item[{\bf A.}] Holevo - Frenkel \& Weiner (H-FW) scenario,
\item[{\bf B.}] Wiener - Ambainis, Nayak, Ta-Shma, \& Vazirani (W-ANTV) scenario, and
\item[{\bf C.}] Bell scenario. 
\end{itemize}
Alice and Bob may be given access to different types of resources which can be classified into two broad categories.

{\bf Type-$1$ (Common-past resources):} Alice and Bob meet in their common past before the game starts and share some correlated systems which they may use later to optimize their payoff(see Fig.\ref{fig15}). In literature, generally three different kinds of resources are considered:
\begin{itemize}
\item[C$_p$-$1$:] Classical shared randomness which we will sometimes call just shared randomness,
\item[C$_p$-$2$:] Quantum entanglement \cite{Horodecki09} which can yield puzzling `nonlocal' correlations, and
\item[C$_p$-$3$:] Beyond quantum nonlocal correlations, {\it eg.} the Popescu-Rohrlich (PR) correlation \cite{Popescu94}.  
\end{itemize}

\begin{figure}[h!]
	\centering
	\subfloat[Common past resources (Type-1)]{\includegraphics[width=0.50\linewidth]{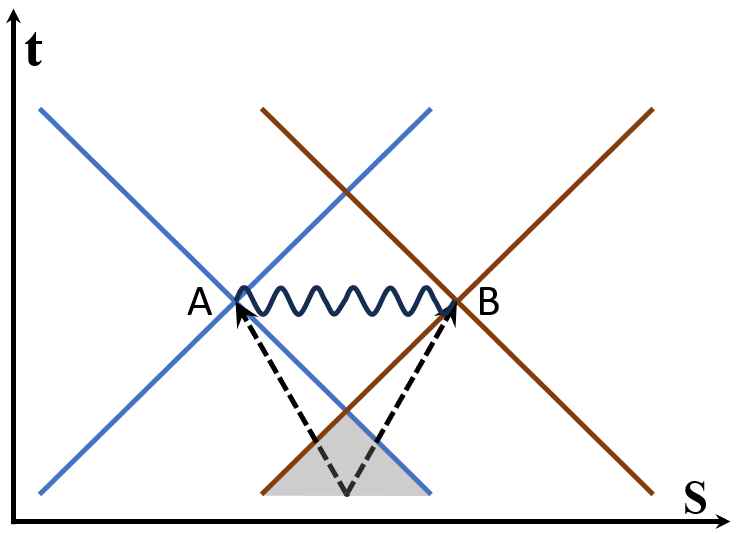}\label{fig14}}
 \hspace{1cm}
	\subfloat[Direct communication resource (Type-2)]{\includegraphics[width=0.35\linewidth]{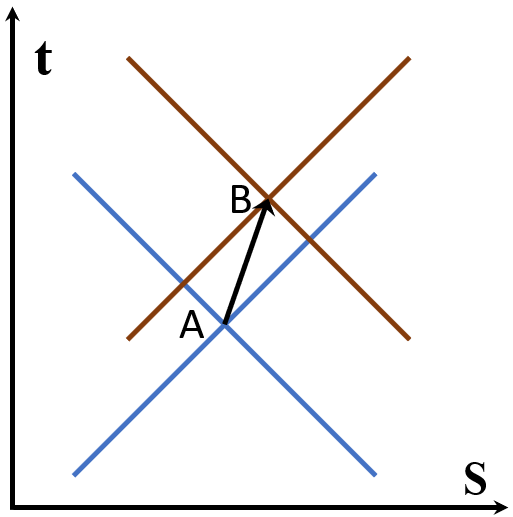}\label{fig15}}\\
 \subfloat[Combination of Direct and Common past resources]{\includegraphics[width=0.45\linewidth]{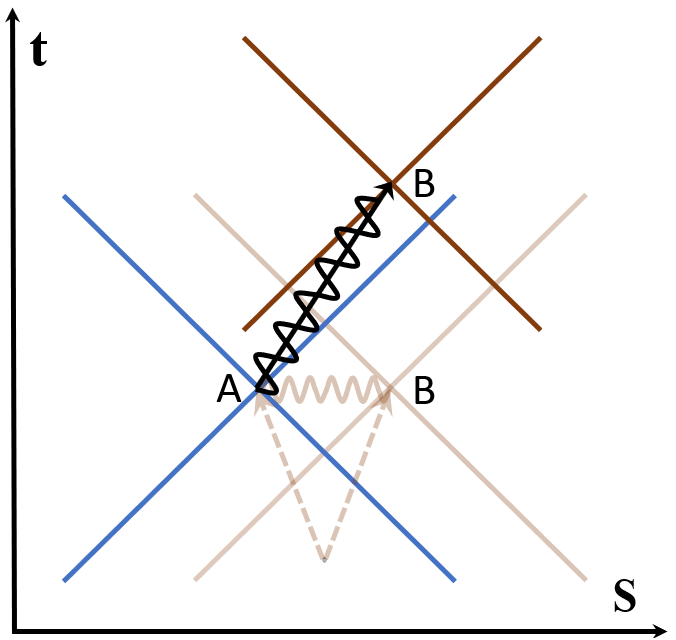}\label{fig16}}
 \caption{Causal structure of different communication resources. (a) The Shaded region is in the common past of both Alice and Bob. They can create some correlated system like shared randomness, entanglement, etc, which can be used later however their communication capacity is zero. These are known as Common past resources ({\bf Type-$1$}). (b) Alice is in the causal past of Bob, Alice sends some physical systems (e.g., classical bits, quantum bits, etc.), and Bob receives them at a later time. These are known as Direct communication resource({\bf Type-$2$}) and all of them has some nonzero communication capacity. (c) This represents the combination of the two scenarios mentioned earlier. In the beginning, they establish a shared resource ({\bf Type-$1$}) in their common past. Then Alice sends a physical system ({\bf Type-$2$}) that is received by Bob in his subsequent causal future. }\label{figcausal}
\end{figure}

{\bf Type-$2$ (Direct-communication resources):} Alice takes some physical system and encodes her message in different (allowed) states of the system(see Fig.\ref{fig15}). She then sends the system to Bob who may perform some measurements to obtain an outcome and accordingly optimize the required payoff. In this work, we will analyze three different kinds of systems\footnote{Such systems are assumed to have some common properties. One such property is the number of perfectly distinguishable states allowed in the theory, also known as the `measurement dimension' of the system \cite{Brunner14}.}: 
\begin{itemize}
\item[D$_c$-$1$:] Classical systems with state space described by a simplex, {\it eg.} a two-level classical system or a bit,
\item[D$_c$-$2$:] Quantum systems with state space $\mathcal{D}(\mathbb{C}^d)$\footnote{$\mathcal{D}\left(\mathcal{H}_S\right)$ denotes the convex-compact set of density operators acting on the Hilbert space $\mathcal{H}_S$ associated with the quantum system $S$.}, {\it eg.} a qubit when $d=2$, and
\item[D$_c$-$3$:] Some hypothetical system with a convex state space \cite{Barrett07}, {\it eg.} polygon systems \cite{Janotta11}.  
\end{itemize}
However, one can consider a communication scenario where both {\bf Type-$1$} and {\bf Type-$2$} resources are used together. The diagram (see Fig.\ref{fig16}) above depicts the causal structure of both Direct communication resources and Common past resources when used in combination.
\par
All three scenarios -- H-FW, W-ANTV, and Bell scenario -- are important to analyze the communication utilities as well as physical-hypothetical properties of the Common-Past and Direct-Communication resources. In the following, we briefly discuss  each of the scenarios to make the distinctions among them clear. 

\subsection{H-FW Scenario} This is the most basic scenario involving two-party communication. Although the study of communication using classical systems within this scenario originally started in the seminal work of Shannon \cite{Shannon48}, in the quantum domain it was introduced by Holevo (H) to understand the limitations of information transmission by a quantum channel \cite{Holevo73}. More recently, Frenkel \& Weiner (FW) have proposed a generalization of this framework \cite{Frenkel15}. Suppose, Alice and Bob are two distant parties. A Referee provides Alice some classical random variable $x\in\mathcal{X}$, while Bob needs to generate some classical random variable $b\in\mathcal{B}$ and return it back to the Referee, according to the correlation $P(b|x)$ generated by Alice and Bob refree gives them some payoff $\beta:P(\mathcal{B}|\mathcal{X})\to\mathbb{R}$  (see Fig.\ref{fig1}). The payoff can be defined on a set of correlations satisfying some constrain or on the achievability of some particular correlation.
\begin{figure}[b!]
\centering
\includegraphics[width=0.45\textwidth]{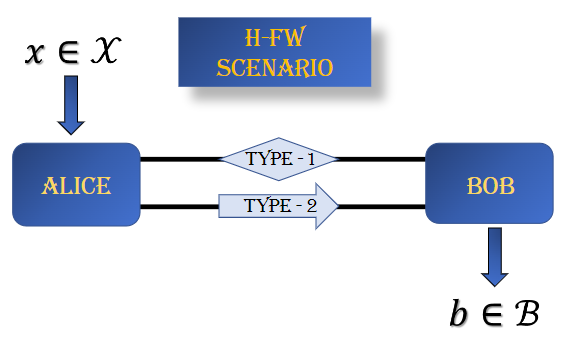}
\caption{(Color online) Holevo-Frenkel-Weiner (H-FW) communication scenario. Only Alice is given some classical input $x\in\mathcal{X}$, while only Bob has to generate a classical output $b\in\mathcal{B}$. Two different types of resources - {\bf Type-$1$} and {\bf Type-$2$} - might be utilized to optimize the targeted payoff $\beta:P(\mathcal{B}|\mathcal{X})\to\mathbb{R}$.}\label{fig1}
\end{figure}

While none of the {\bf Type-$1$} resources has communication utility  on their own in the H-FW scenario, to create a resource of the kind C$_p$-$1$ Alice and Bob must communicate to each other if they do not meet in their common past. On the other hand,  the effects of the resource of kind C$_p$-$2$ can be simulated with sufficient classical communication between Alice and Bob \cite{Hall11,Banik13}. However, the resource of the kind C$_p$-$2$, {\it i.e.} quantum entanglement, cannot be created with local operations and classical communication (LOCC) between Alice and Bob \cite{Horodecki09}. The only way they can possess it is by preparing it in their common past or if it is given by some third agent.

Although {\bf Type-$1$} resources have zero communication utility, they can enhance the communication utility of the {\bf Type-$2$} resources. The seminal example is the `quantum superdense coding' protocol, which shows that preshared entanglement between Alice and Bob can double the classical communication capacity of a perfect quantum channel \cite{Bennett92} (see \cite{Schaetz04,Barreiro08,Williams17} for experimental implementations of the protocol). Later, it has been shown that entanglement can increase the classical capacity of some noisy quantum channels by an arbitrarily large constant factor over their best-known classical capacity achievable without entanglement \cite{Thapliyal99}. Very recently, it has been shown that entanglement can also enhance the communication utility of a perfect classical channel \cite{Frenkel21}, within this H-FW scenario. Furthermore, the assistance of post-quantum correlation in this case turns out to be more useful than quantum entanglement. Here we note that the success probability of the game proposed in \cite{Frenkel21} with the resource of $1$ bit of classical communication along with some generic no signaling (NS) correlation scales linearly with the famous Clauser-Horne-Shimony-Holt (CHSH) expression \cite{Clauser69} (see Appendix \ref{appendix-a}).

In this H-FW communication scenario, the celebrated no-go result by Holevo limits the information capacity of an $n$-level quantum system by the optimal value achievable with an $n$-level classical system \cite{Holevo73}. While in Holevo's theorem the utility is calculated through an entropic quantity, namely the mutual information $I(\mathcal{X}:\mathcal{B})$ between Alice's input random variable and Bob's output random variable, Frenkel \& Weiner evaluate how successfully Alice and Bob can manage to store and recover the value of $x\in\mathcal{X}$ by requiring Bob to specify a value $b\in\mathcal{B}$ and giving a generic reward of value $f(x, b)$ to the team. They have shown that whatever the probability distribution of $x$ and the reward function $f$ are, when using a quantum $n$-level system, the maximum expected reward obtainable with the best possible team strategy is equal to that obtainable with the use of a classical $n$-level system. Furthermore, like {\it Shannon’s Noisy Channel Coding Theorem}, Holevo's theorem captures the channel's capacity of reliable transmission rate in the asymptotic limit, while with a single use of the channel things can go differently. Therefore, the result of Frenkel \& Weiner should be seen as an independent no-go theorem while evaluating the communication utility of a quantum system. 

It is important to note that Frenkel \& Weiner, while deriving their no-go result, consider classical shared randomness (C$_p$-$1$ resource of {\bf Type-$1$}) as free, {\it i.e.} Alice and Bob can have arbitrary amount of shared randomness in their possession (see also \cite{DallArno17}). We have already seen that C$_p$-$2$ and C$_p$-$3$ resources of {\bf Type-$1$} can empower communication utility of different resources of {\bf Type-$2$} \cite{Bennett92,Schaetz04,Barreiro08,Williams17,Thapliyal99,Frenkel21}. It is, therefore, a natural question to analyze the communication utility of different resources of {\bf Type-$2$} in absence of C$_p$-$1$ resource of {\bf Type-$1$}, {\it i.e.} shared randomness. It is precisely this question that the present work aims to address.

\subsection{W-ANTV Scenario}
This scenario in more involved and also distinct from the earlier scenario. In this case, the Referee provides random variables $x\in\mathcal{X}$ and $y\in\mathcal{Y}$ to Alice and Bob, respectively and only Bob needs to return an output random variable $b\in\mathcal{B}$. Accordingly, they will be given some payoff depending on the correlation generated by them $\beta:P(\mathcal{B}|\mathcal{X},\mathcal{Y})\to\mathbb{R}$. Generally, $\mathcal{Y}$ is considered to be a set of queries regarding $\mathcal{X}$ (see Fig.\ref{fig2}). Random Access Codes (RACs) are canonical examples of tasks that fit perfectly within this scenario. In a RAC task, the sender encodes a data set -- typically a string of input bits -- onto a physical system of bounded dimension and transmits it to the receiver, who then attempts to guess a randomly chosen part of the sender’s data set -- typically one of the sender’s input bits. While the protocol was first introduced by Wiesner (W) by the name conjugate coding \cite{Wiesner83}, the study drew renewed interest nearly two decades later when it was rediscovered by Ambainis, Nayak, Ta-Shma, and  Vazirani (ANTV) \cite{Ambainis99,Ambainis02}. 
\begin{figure}[h!]
\centering
\includegraphics[width=0.45\textwidth]{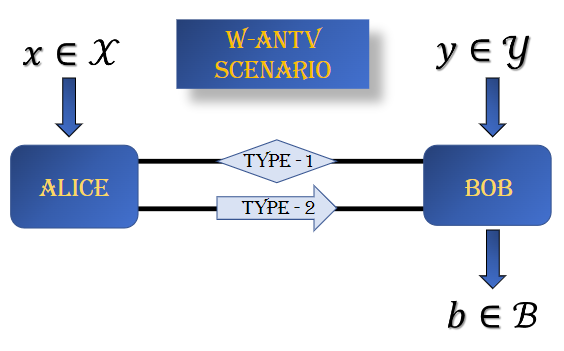}
\caption{(Color online) Wiesner-Ambainis- Nayak-Ta-Shma-Vazirani (W-ANTV) communication scenario. While Alice is given some classical input $x\in\mathcal{X}$, Bob is also given  (generally some query regarding $\mathcal{X}$) input $y\in\mathcal{Y}$. However, only Bob has to generate a classical output $b \in \mathcal{B}$. Like the H-FW case, two different types of resources might be used to optimize the payoff $\beta:P(\mathcal{B}|\mathcal{X},\mathcal{Y})\to\mathbb{R}$.}\label{fig2}
\end{figure}

Interestingly, in this scenario, the communication utility of a qubit (D$_c$-$2$ resource of {\bf Type-$2$}) has been shown to be more than a classical bit (D$_c$-$1$ resource of {\bf Type-$2$}), even in absence of any kind of {\bf Type-$1$} resources \cite{Wiesner83,Ambainis99,Ambainis02}. Subsequently, several variants of this task have been studied leading to interesting foundational implications \cite{Spekkens09,Banik15,Czekaj17,Ambainis19,Horodecki19,Saha19,Vaisakh21,Naik21}. Furthermore, in this scenario, it has also been shown that a resource of {\bf Type-$1$} can empower the communication utility of a resource of {\bf Type-$2$} \cite{Ambainis08,Pawlowski10,Tavakoli21,Bourennane}. In W-ANTV scenario, it has also been shown that post-quantum nonlocal correlations can empower classical communication arbitrarily, making communication complexity trivial \cite{vanDam,Brassard06,Buhrman10}. Nonetheless, all these works add more justification to the question that we aim to address in this present paper.

\subsection{Bell Scenario}
In this case both Alice and Bob are given classical random variables $x\in\mathcal{X}$ and $y\in\mathcal{Y}$, respectively; and both of them have to return some random variable $a\in\mathcal{A}$ and $b\in\mathcal{B}$, respectively to the Referee. Accordingly, they are given some payoff $\beta:P(\mathcal{A},\mathcal{B}|\mathcal{X},\mathcal{Y})\to\mathbb{R}$. Unlike the earlier two scenarios, they are not allowed to communicate with each other, {\it i.e.} resources of {\bf Type-$2$} are prohibited in this case. However, they can share some resources of {\bf Type-$1$}. This scenario is mostly studied within the quantum foundations community to understand the strength of different correlations. The seminal result of Bell \cite{Bell64,Bell66} (see also \cite{Mermin93}) establishes that this scenario is capable of revealing a hierarchy among different common past resources. In fact, from entangled quantum states, one can come up with correlations that yield more success for some suitably chosen payoffs than any classical shared randomness. Such correlations are more popularly known as Bell-nonlocal correlations (see \cite{Brunner14} for a review on Bell nonlocality). Subsequently, it has been shown that this scenario is capable of making a distinction between quantum and post-quantum correlations \cite{Popescu94,Cirelson80}. 
\begin{figure}[h!]
\centering
\includegraphics[width=0.45\textwidth]{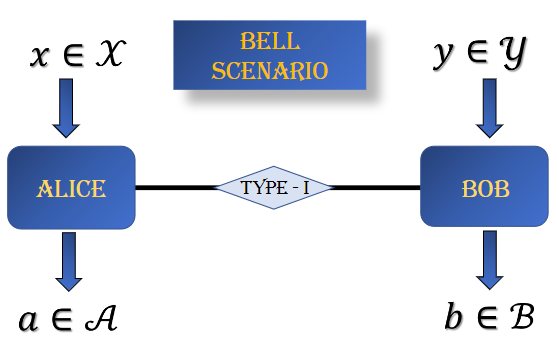}
\caption{(Color online) Canonical Bell Scenario. Here, both Alice and Bob are given inputs $x\in\mathcal{X}$ and  $y\in\mathcal{Y}$, respectively. Both of them need to return back classical outputs $a\mathcal{A}$ and $b\mathcal{B}$. Unlike the earlier two cases, no communication is allowed between Alice and Bob. However they can utilize {\bf Type-$2$} resources to optimize their joint payoff $\beta:P(\mathcal{A},\mathcal{B}|\mathcal{X},\mathcal{Y})\to\mathbb{R}$.}\label{three-Restaurant-game}
\end{figure}

More recently, this framework has been proved to be useful for separating different mathematical models for composite systems \cite{Slofstra20} which accordingly provides answers to some long-standing problems in complexity theory and operator theory \cite{Ji20} and leads to some undecidable consequences regarding the structure of  quantum logic \cite{Fritz21}. More recently, this scenario has been further generalized where Alice and Bob are given quantum inputs instead of classical random variables. \cite{Buscemi12,Branciard13,Lobo21}. 

Within this scenario, classical shared randomness that can produce only Bell-local correlations is considered to be free compared to the nonlocal correlations \cite{Wolfe20,Schmid20,Rosset20}. However, here our aim is to understand the utility of shared randomness in the H-FW scenario. In particular, whether it can enhance the utility of {\bf Type-$2$} resources in H-FW scenario. So it is judicious to start by considering shared randomness as a costly resource. In the following, we proceed to the main technical part of this work.   

\section{Classical superdense coding}
In this section, we will show how classical shared randomness can play a nontrivial role in enhancing the communication utility of a perfect classical communication channel. To this aim, we propose a class of two-party communication games and then analyze the payoffs of these games, first with $1$-bit of classical communication only and then with the additional assistance of shared randomness.  
\subsection{A two-party communication game}
Suppose Charlie is the manager of a chain of Restaurants. Let the total number of Restaurants be $n$. On each day one of the Restaurants remains closed and which one is to be closed is decided by Charlie randomly with uniform probability.

Charlie informs Alice which Restaurant is closed for the day. Another fellow, Bob, must visit one of Charlie's Restaurants for lunch each day. However, Bob does not know which Restaurant is closed for the day. He relies on Alice to communicate that information to him. Alice is restricted to communicating only 1-bit of information to Bob. The collective aim of Alice and Bob is to ensure that the following two conditions are satisfied:
\begin{itemize}
\item[(h$1$)] Bob never visits a closed Restaurant.
\item[(h$2$)] Bob visits each Restaurant with equal probability.   
\end{itemize}
The situation can be described with the following `visit' matrix:
\begin{align}\label{visit}
\mathbb{V}\equiv\begin{blockarray}{ccccc}
& 1_b & 2_b & \cdots & n_b \\
\begin{block}{c(cccc)}
1_c~~ & p(1_b|1_c) & p(2_b|1_c) & \cdots & p(n_b|1_c) \\
2_c~~ & p(1_b|2_c) & p(2_b|2_c) & \cdots & p(n_b|2_c) \\
\cdot~~ & \cdot & \cdot & \cdots & \cdot \\
\cdot~~ & \cdot & \cdot & \cdots & \cdot \\
n_c~~ & p(1_b|n_c) & p(2_b|n_c) & \cdots & p(n_b|n_c) \\
\end{block}
\end{blockarray}  
\end{align}
The entry $p(i_b|j_c)$ represents Bob's probability of visiting the $i^{th}$ Restaurant (hence the subscript `$b$') given that the $j^{th}$ Restaurant is closed (hence the subscript `$c$'). The condition ($\mathrm{h}1$) implies
\begin{align}
p(i_b|i_c)=0,~\forall~i\in\{1,\cdots,n\}, 
\end{align}
{\it i.e.} all the diagonal entries of the matrix $\mathbb{V}$ must be zero. Bob's probability of visiting the $i^{th}$ Restaurant can be obtained from the sum of the entries for the $i^{th}$ column of the matrix $\mathbb{V}$. The condition ($\mathrm{h}2$) thus reads as
\begin{align}
p(i_b)=\sum_{j=1}^np(i_b|j_c)p(j_c)=\frac{1}{n}\sum_{j=1}^np(i_b|j_c)=\frac{1}{n}.
\end{align}
Here, $p(j_c)=1/n,~\forall~j\in\{1,\cdots,n\}$, as the closing probabilities are assumed to be uniform and the last equality is imposed by the condition ($\mathrm{h}2$). Alice can send only $1$-bit of classical communication to Bob. Furthermore, we assume that shared randomness is a costly resource. Alice and Bob, however, are free to use local randomness on their part while playing the game. With this, we can formally define different classical strategies.
\begin{definition}[Classical deterministic strategy] \label{def:classical_deterministic_strategy}
 A classical deterministic strategy is an encoding-decoding tuple $(\mathrm{E},\mathrm{D})$, where $\mathrm{E}$ is a `$\log n$-bit to $1$-bit' deterministic function and  $\mathrm{D}$ is a `$1$-bit to $\log n$-bit' deterministic function, {\it i.e.} $\mathrm{E}:\{1,\cdots,n\}\mapsto\{0,1\}$ and $\mathrm{D}:\{0,1\}\mapsto\{1,\cdots,n\}$. 
\end{definition}
\begin{definition}
[Classical mixed strategy] A classical mixed strategy is a probabilistic strategy $\left(P_\mathbb{E},P_\mathbb{D}\right)$, where $P_\mathbb{E}$ and $P_\mathbb{D}$ are probability distributions over the space of deterministic encodings ($\mathbb{E}$) and decodings ($\mathbb{D}$), respectively. 
\end{definition}
\begin{definition}
[Classical correlated strategy] A classical correlated strategy is a probability distribution  $P_{\mathbb{E}\times\mathbb{D}}$ over the space of Cartesian product of deterministic encodings and decodings.
\end{definition}
Classical mixed strategies can be realized with local randomness on Alice's and Bob's sides. On the other hand, the implementation of classical correlated strategies requires Alice and Bob to posses classical shared randomness. Formally, classical shared randomness between two parties can be defined as a joint probability distribution $\{p(\alpha,\beta)\}$ on $\mathtt{A}\times\mathtt{B}$, where $\alpha\in\mathtt{A}$ and $\beta\in\mathtt{B}$ are random variables possessed by Alice and Bob, respectively. The amount of shared randomness can be quantified through classical mutual information, $I(\mathtt{A}:\mathtt{B}):=H(\mathtt{A})+H(\mathtt{B})-H(\mathtt{A},\mathtt{B})$, where $H(\mathtt{W}):=-\sum_wp(w)\log p(w)$ is the Shannon entropy of the random variable $w\in\mathtt{W}$. It is not hard to see that if Alice and Bob share the classically-correlated state $\rho=\frac{1}{2}\ket{00}\bra{00}+\frac{1}{2}\ket{11}\bra{11}\in\mathcal{D}(\mathbb{C}^2\otimes\mathbb{C}^2)$ then they can extract $1$-bit of classical shared randomness by performing $\sigma_z$ measurement on their respective subsystems. A similar process will yield $H(p)$-bit of shared randomness from the state $\rho=p\ket{00}\bra{00}+(1-p)\ket{11}\bra{11}$, with $p\in[0,1]$. For more elaborative discussions on classical shared randomness from a resource theoretic perspective we refer to the work \cite{Guha21}.  

At this point, we digress a bit to discuss mixed and correlated strategies in the light of Bayesian game theory. John Nash, in his seminal work, introduced the concept of Nash equilibrium and proved that any game with a finite number of actions for each player always has a mixed strategy Nash equilibrium \cite{Nash}. Later, Harsanyi introduced the notion of Bayesian games where each player has some private information unknown to other players \cite{Harsanyi}. Aumann proved that in the Bayesian game scenario the notion of mixed strategy Nash equilibrium needs to be generalized to a correlated equilibrium where some adviser provides advice to the players in the form of shared randomness to achieve the correlated equilibrium \cite{Aumann87}. Note that every pure/ mixed Nash equilibrium is also a correlated equilibrium, but the set of correlated equilibria is strictly larger than the set of mixed strategy Nash equilibria. It has also been shown that correlated equilibria are easier to compute \cite{Papadimitriou08}. Recently several interesting results have been reported connecting the study of Bayesian game theory and quantum nonlocality \cite{Brunner13,Pappa15,Roy16,Banik19(1)}.

In a communication task the following two questions are important to be explored. Firstly, whether there exists an optimal strategy to perform the task, and secondly, if it does exist, whether it is achievable with the resource available. For instance in Bayesian games where each player has some private information unknown to the other player although there exist the notion of Nash Equilibrium but its achievability is in question. Rather, a more general concept in this case is the correlated equilibrium which can be achieved if correlations are provided to the players as assistance. In communication scenarios, one player may have partial knowledge of the other player's information due to limited communication from the latter to the former. In this work, we focus on the existence of a winning strategy with classical communication, quantum communication, and the assistance of classical correlation (Shared-randomness). Considering shared randomness as a valuable resource, there may exist winning strategies that require its assistance. We consider the scenario where Charlie, as a referee, provides different resources to Alice and Bob to implement the winning strategy. Charlie allows arbitrarily large classical side channels from Alice to him and also from him to Bob. Through this Charlie will help Alice and Bob to develop the strategy that they will follow once the game starts. Apart from strategy development, the establishment of shared randomness between Alice and Bob will not be allowed through these side channels intervened by Charlie. In other words, we can say that the side channels are constrained. 
\begin{figure}[h!]
	\centering
	\subfloat[Step 1 : Announcement]{\includegraphics[width=0.45\linewidth]{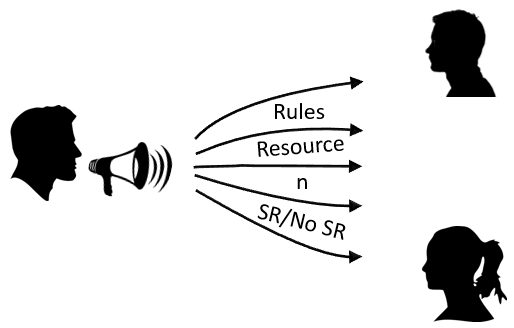}\label{fig17}}
 \hspace{1cm}
	\subfloat[Step 2 : Strategy Development]{\includegraphics[width=0.45\linewidth]{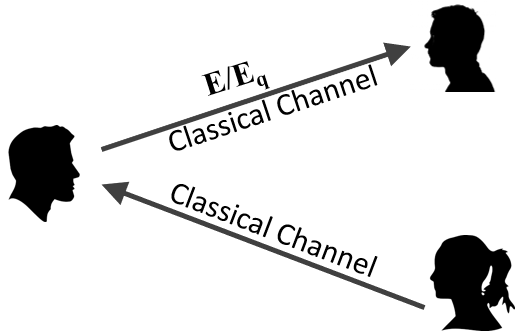}\label{fig18}}\\
 \subfloat[Step 3 : Active Gameplay]{\includegraphics[width=0.45\linewidth]{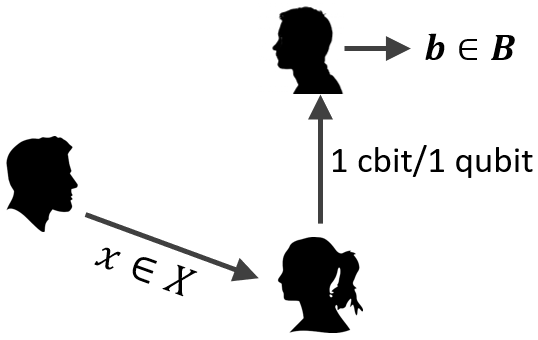}\label{fig19}}
 \caption{The restaurant game can be divided into three different steps. (a) In the first step, Charlie announces the rules of the game and also declares the resources that will provided to the players. (b) An arbitrarily large classical side channel is opened from Alice to Charlie and from Charlie to Bob with the constraint that Charlie will forward only the encoding strategies to Bob. Once the encoding strategies have been forwarded, these side channels will be permanently closed.  (c) In the active part of the game, Charlie will give $x\in\mathcal{X}$ to Alice and Alice sends $1$-cbit or $1$-qubit to Bob and accordingly Bob gives an output $b\in\mathcal{B}$.}
 \label{fig}
\end{figure}
For an explicit example, let's consider a specific two-party communication task: the Restaurant Game and their winning conditions (h1,h2) and explore the achievability of winning strategies. So the Game setup can be divided into three steps: Announcement, Strategy development, and Active Game-play (see Fig.\ref{fig}). 
\begin{itemize}
\item Step 1 : Announcement (see Fig.\ref{fig17})
\begin{itemize}
 \item [a.] Charlie is hosting a game where some payoff will be given depending on fulfilling certain winning conditions (h1 $\&$ h2).
\item[b.]  Two random players who have never met in the past, Sender (Alice) and Receiver (Bob), participate in the Restaurant game.
        \item[c.]Charlie declares number of restaurant (n) Bob has to visit in future and demands them to win $H_n(1/n)$ game.
        \item[d.] Charlie also declares the communication resources (classical channel/quantum channel) that will be opened in the future and announces whether some shared randomness will be provided as assistance.
    \end{itemize}
    \item Step 2 : Strategy Development (see Fig.\ref{fig18})
    \begin{itemize}
        \item [a.] 
        Charlie opens an arbitrarily large constrained classical channel from Alice to Charlie and from Charlie to Bob. But Charlie will communicate to Bob only the encoding strategy received from Alice.
        \item[b.] Alice chose an encoding strategy based on available resources and sent the encoding strategies to Bob via Referee.
        \item[c.] Charlie now closes both communication channels, from Alice to him and him to Bob.
       
    \end{itemize}
    \item Step 3 : Active Gameplay (see Fig.\ref{fig19})
    \begin{itemize}
        \item[a.] Charlie provides $x\in\mathcal{X}$ to Alice.  
        \item[b.] A one cbit/qubit communication channel has been opened from Alice to Bob.
         \item[c.] Bob outputs $b\in \mathcal{B}$ using some suitable decoding strategy.
    \end{itemize}
\end{itemize}
In this scenario shared randomness is a costly resource as the participating players are previously unknown to each other also they cannot generate shared randomness with the constrained classical channels. This setup explains the achievability of different winning strategies with the assistance of different communication resources but in the rest of the paper, we will focus on the existence of winning strategy.

Coming back to the winning strategy in our Restaurant game, Alice and Bob can follow any mixed strategy whenever $1$-bit of communication is allowed, but a correlated strategy requires additional resource of shared randomness. It is easy to see that the condition ($\mathrm{h}1$) can be readily satisfied within the limited set of mixed strategies. Alice just tells Bob if the first Restaurant is open or closed using $1$-bit classical message. If the first Restaurant is open, Bob visits the first Restaurant. Otherwise, he visits one of the other Restaurants.
However, satisfying the second condition is more tricky. In the next subsection, we analyze the cases where it can be satisfied. 

\subsection{Games winnable with mixed strategies}\label{Games winnable with mixed strategies}
We can consider a more general class of games where the winning condition ($\mathrm{h}2$) generalizes as follows:
\begin{itemize}
\item[($\mathrm{h}1^\prime$)] Bob never visits a closed Restaurant $\equiv(\mathrm{h}1)$.
\item[($\mathrm{h}2^\prime$)] Bob visits the $k^{th}$ Restaurant with probability $\gamma_k\geq0,~\forall~k \in \{1,\cdots,n\}$, where each $\gamma_k$ can in general be different.  
\end{itemize}
Values of $\gamma_k$'s are known to both Alice and Bob before the game starts. Since all Restaurants are closed with equal probability, we therefore have $\gamma_k\le\left(1-\frac{1}{n}\right),~\forall~k$. Clearly, each tuple $(\gamma_1,\gamma_2,\cdots,\gamma_n)$ defines a game which we will denote as $\mathbb{H}^n(\gamma_1,\gamma_2,\cdots,\gamma_n)$, and the special case where $\gamma_i$'s are uniform will be denoted as $\mathbb{H}^n\left(1/n\right)$. Our aim is to find which games are winnable under mixed classical strategies. A generic such strategy can be realized in the following steps.
\begin{itemize}
\item[(S-1)] If the $k^{th}$ Restaurant is closed, Alice tosses a $2$-sided biased coin having the outcomes $\{0,1\}$. The outcome probabilities of the coin are given by $P_k(0)=\alpha_k$ and $P_k(1)=1-\alpha_k$.
\item[(S-2)] Alice communicates the outcome of the coin toss to Bob through the perfect $1$-bit classical channel. 
\item[(S-3)] Bob prepares two $n$-sided coins with outcome probabilities specified by the probability vectors $\vec{r}=(r_1,r_2,\cdots,r_n)$ and $\vec{q}=(q_1,q_2,\cdots,q_n)$, respectively. If he receives $0$ from Alice he tosses the $\vec{r}$ coin and visits the $i^{th}$ Restaurant if $i^{th}$ outcome occurs. He follows a similar strategy with the $\vec{q}$ coin if $1$ is received from Alice.
\end{itemize}
With this strategy, the conditional probability $p(m_b|k_c)$ that Bob visits the $m^{th}$ Restaurant provided the $k^{th}$ Restaurant is closed turns out to be,
\begin{align}
p(m_b|k_c)=\alpha_k\times r_m+(1-\alpha_k)\times q_m.\label{eq1}
\end{align}
To satisfy the first condition that Bob never visits a closed Restaurant we must have,
\begin{align}
p(i_b|i_c)&=\alpha_i\times r_i+(1-\alpha_i)\times q_i=0\label{eq2}\\
&~~~~~~~~~~\forall~i~\in~ \{1,2,\cdots n\}.~\nonumber
\end{align}
Eq.(\ref{eq2}) holds for the $i^{th}$ Restaurant if and only if at least one of the following conditions is satisfied:
\begin{subequations}
\begin{align}
&\alpha_i=0~~\mbox{and}~~q_i=0,  \label{eq3}\\ 
&\alpha_i=1~~\mbox{and}~~r_i=0, \label{eq4}\\ 
&r_i=0~~\mbox{and}~~q_i=0.\label{eq5}
\end{align}
\end{subequations}
Without loss of generality we can divide the set of $n$ Restaurants in two set i.e.  $\mathbb{P}\equiv\{1,\cdots,a\}$ and
$\mathbb{O}\equiv\{a+1,\cdots,n\}$, such that $\gamma_i>0~\forall~i\in\mathbb{P}$ and 
$\gamma_i=0~\forall~i\in\mathbb{O}$. For every Restaurant either Eq.(\ref{eq3}) or Eq.(\ref{eq4}) or Eq.(\ref{eq5}) holds.
If $i\in \mathbb{P}$ then Eq.(\ref{eq5}) cannot hold for Restaurant $i$ since that would yield $\gamma_i=0$. Thus for $i\in \mathbb{P}$ either Eq.(\ref{eq3}) or Eq.(\ref{eq4}) must hold. Also we can conclude that for a Restaurant $i\in \mathbb{O}$  Eq.(\ref{eq5}) must hold. Therefore, the best they can do with a mixed classical strategy is to make a partition of Restaurants into three sets $X,~Y$ and $Z$ such that each of the Restaurants in sets $X,~Y$ and $Z$ satisfies Eq.(\ref{eq3}), Eq.(\ref{eq4}) and Eq.(\ref{eq5}), respectively. Alice sends $1$ if a Restaurant from the set $X$ is closed and sends $0$ if a Restaurant from the set $Y$ is closed. If a Restaurant $j\in Z$ is closed, she sends $0$ with  probability $\alpha_j$ and $1$ with probability $1-\alpha_j$. Bob never visits the Restaurants in $X$ whenever he receives $1$ and never visits the Restaurants in $Y$ whenever he receives $0$. Formally,
\begin{align*}
X\equiv\{j~|\alpha_j=0~~&\mbox{and}~~q_j=0\},\\ 
Y\equiv\{j~|\alpha_j=1~~&\mbox{and}~~r_j=0\},\\
Z\equiv\{j~|r_j=0~~&\mbox{and}~~q_j=0\}=\mathbb{O},\\
X\cup Y=\{1,2,\cdots, &a\}~~\&~~X\cap Y=\emptyset.
\end{align*}
Thus, the probability $p(m_b)=\sum_{k} p(m_b|k_c)p(k_c)$, that Bob visits $m^{th}$ Restaurant turns out to be,
\footnotesize
\begin{align}
p(m_b)&=\frac{1}{n} \left[\sum_{k\in X}p(m_b|k_c)+\sum_{k\in Y}p(m_b|k_c)+\sum_{k\in Z}p(m_b|k_c)\right]\nonumber\\
&=\frac{1}{n}\left[\sum_{k\in X}q_m+\sum_{k\in Y}r_m+\sum_{k\in Z}(\alpha_k\times r_m+(1-\alpha_k)\times q_m)\right].
\end{align}
\normalsize
According to the condition ($\mathrm{h}2^\prime$) we have $p(m_b)=\gamma_m$. and depending on whether $m$ belongs to $X$,$Y$ or $Z$ we have,
\begin{subequations}
\begin{align}
\gamma_{m}&=\frac{1}{n}\left[\sum_{k\in Y}r_m+\sum_{k\in Z}\alpha_k\times r_m\right]\nonumber\\
&=\left[\frac{|Y|}{n}+\bar{\alpha}_z\frac{|Z|}{n}\right]r_m,~~\forall~m~\in~X; \label{e1}\\
\gamma_{m}&=\frac{1}{n}\left[\sum_{k\in X}q_m+\sum_{k\in Z}(1-\alpha_k)\times q_m\right]\nonumber\\
&=\left[\frac{|X|}{n}+(1-\bar{\alpha}_z)\frac{|Z|}{n}\right]q_m,~~\forall~m~\in~Y; \label{e2}\\
\gamma_{m}&=0,~~\forall~m~\in~Z; \label{e3}
\end{align}
\end{subequations}
where $|\cdot|$ denotes the cardinality of a set, and $\bar{\alpha}_z:=\frac{1}{|Z|}\sum_{k\in Z}\alpha_k$. Thus, a classical winning strategy exists for the given game if and only if there exists a partition of $\mathbb{P}$ into two non-empty sets $X$ and $Y$, along with two probability vectors $\vec{r}$ and $\vec{q}$ such that
\begin{subequations}
\begin{align}
r_m&=\frac{n}{|Y|+\bar{\alpha}_z|Z|}\gamma_m,~~\forall~~m~\in~X,\\
q_m&=\frac{n}{|X|+(1-\bar{\alpha}_z)|Z|}\gamma_m,~~\forall~~m~\in~Y.
\end{align}
\end{subequations}
This defines the class of games that are winnable by  classical mixed strategies. From here on-wards we will specifically consider the games with $\gamma_1>0~\forall~i$. So it is better to see how Eq.(\ref{e1})-(\ref{e3}) get modified in such a case. Since we have $|Z|=0$, no Restaurant can lie in the set $Z$, thus we have
\begin{align}
\gamma_{m}=\frac{|Y|}{n}r_m,~~\forall~m\in X; ~~
\gamma_{m}=\frac{|X|}{n}q_m,~~\forall~m\in Y. \label{eq9}
\end{align}
Note that for the case where $\gamma_i>0~\forall~i$ sets $X$ and $Y$ must be non empty.
Now the probability vectors $\vec{r}$ and $\vec{q}$ are given by
\begin{align}
r_m&=\frac{n}{|Y|}\gamma_m,~\forall~m\in X;~~
q_m=\frac{n}{|X|}\gamma_m,~\forall~m\in Y.\label{f2}
\end{align}

\subsubsection{Special cases: $\mathbb{H}^n(1/n)$}
For this special case we have $|Z|=0$ and $a=n$. Using $\sum_{m\in X}r_m=1$ and $\sum_{m\in Y}q_m=1$ in Eq.(\ref{f2}) we have
\begin{align}
\sum_{m\in X}\gamma_m\frac{n}{|Y|}=1=\sum_{m\in Y}\gamma_m\frac{n}{|X|},\label{eq10}
\end{align}
which for $\mathbb{H}^n(1/n)$ game reduces to
\begin{align}
\label{eq11}
\sum_{m\in X}\frac{1}{|Y|}&=\sum_{m\in Y}\frac{1}{|X|},~~\Rightarrow~~
\frac{|X|}{|Y|}=\frac{|Y|}{|X|}.
\end{align}
This further implies $|X|=|Y|=n/2$. Thus, $\mathbb{H}^n(1/n)$ cannot be won in the case when $n$ is odd. However if  $n$ is even then the task becomes easy. We take  $X=\{1,2,...,n/2\}$ and  $Y=\{n/2+1,n/2+2,...,n\}$. Alice informs Bob if the closed Restaurant belongs to $X$ or $Y$. If the closed Restaurant is among $X$, then Bob tosses a n/2 -sided fair coin and visits one of the Restaurants in $Y$. If the closed Restaurant is in $Y$, then Bob tosses a n/2-sided fair coin and visits one of the Restaurants in $X$. Thus a $\mathbb{H}^n(1/n)$ game is always winnable using a mixed strategy if and only if $n$ is even. 

\subsubsection{Three-Restaurant general case:
$\mathbb{H}^3(\gamma_1,\gamma_2,\gamma_3)$}
\begin{center}
\begin{table}[b!]
\begin{tabular}{ |c||c|c|c|c|c|c|  }
\hline
~~Set~~ & $R_1$ & $R_2$ & $R_3$ & $R_4$ & $R_5$ & $R_6$\\
\hline\hline
X & $\{1\}$ & $\{2\}$ & $\{3\}$ & $~\{2,3\}~$ & $~\{1,3\}~$ & $~\{1,2\}~$\\
\hline
Y & $~\{2,3\}~$ & $~\{1,3\}~$ & $~\{1,2\}~$ & $\{1\}$ & $\{2\}$ & $\{3\}$\\
\hline
\end{tabular}
\caption{Six different partitioning of three Restaurants into two nonempty disjoint sets.}
\label{table2}
\end{table}
\end{center}
\vspace{-.7cm}
Note that in the three Restaurant case at most one of the $\gamma_i$ can be $0$. All games where one out of the three $\gamma_i$'s is $0$ are winnable with a classical mixed strategy. We provide explicit strategy for such cases. Consider that $\gamma_3=0$. Alice sends $0$ if Restaurant $1$ is closed, she sends $1$ if Restaurant $2$ is closed and if Restaurant 3 is closed she sends $0$ with probability $p$ and $1$ with probability $1-p$. Bob visits Restaurant $2$ if he receives a $0$ and he visits Restaurant $1$ if he receives a $1$. This strategy ensures that Bob never visits a closed Restaurant and we also have
\begin{align}
\gamma_1=\frac{1}{3}[1+(1-p)],~~\gamma_2=\frac{1}{3}[1+p],~~\gamma_3=0.
\end{align}
With an appropriate choice of the values for $p\in[0,1]$ any game with $\gamma_3=0$ can be won. Similar strategies hold for the cases where $\gamma_1=0$ and $\gamma_2=0$. We now move on to the scenario where $\gamma_i>0~\forall~i$. We start by looking at all the different partitions $X$ and $Y$ of the Restaurants. Since the set $X$ and $Y$ must be non-empty we can make six different partitions of three Restaurants as shown in Table \ref{table2}. Without loss of any generality, we can consider only the three cases $R_1$, $R_2$, and $R_3$. For these cases we have,
\begin{itemize}
\item[$R_1:$] In this case we have, $r_2=r_3=q_1=0, r_1=1$ and $q_2+q_3=1$; that imply, $\gamma_1=\frac{2}{3}$ and $\gamma_2+\gamma_3=\frac{1}{3}$.
\item[$R_2:$] Here, $r_1=r_3=q_2=0, r_2=1$ and $q_1+q_3=1$; which further imply, $\gamma_2=\frac{2}{3}$ and $\gamma_1+\gamma_3=\frac{1}{3}$.
\item[$R_3:$] In this case, $r_1=r_2=q_3=0, r_3=1$ and $q_1+q_2=1$; and consequently, $\gamma_3=\frac{2}{3}$ and $\gamma_1+\gamma_2=\frac{1}{3}$.
\end{itemize}
The other three can be achieved from these three by  inverting Alice's encoding $0\rightarrow 1$ and $1\rightarrow 0$. Therefore, only those $\mathbb{H}^3(\gamma_1,\gamma_2,\gamma_3)$ are perfectly winnable with a classical mixed strategy where one of the $\gamma_i$ value is $2/3$ or $0$ (see Fig. \ref{fig4}). In particular, the game $\mathbb{H}^3(1/3)$ is not winnable by a classical strategy. 
\begin{figure}[t!]
\centering
\includegraphics[width=0.5\textwidth]{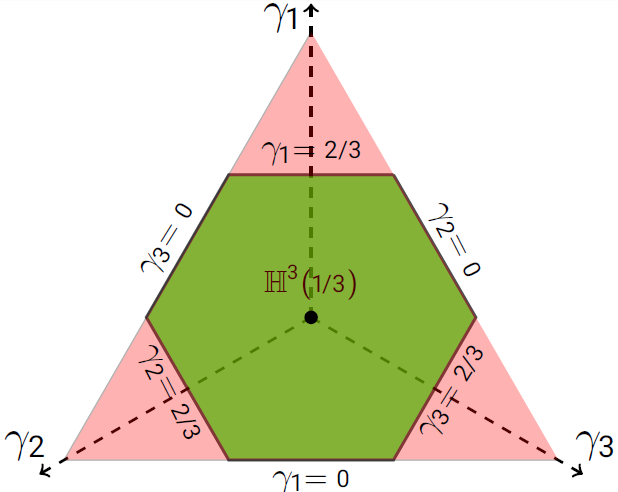}
\caption{(color online) Parameter-space ($\gamma_1+\gamma_2+\gamma_3=1$ plane) of the games $\mathbb{H}^3(\gamma_1,\gamma_2,\gamma_3)$. Orange shaded regions are the unphysical games as the conditions ($\mathrm{h}1^\prime)$ and ($\mathrm{h}2^\prime)$ cannot be satisfied for the parameters values chosen from there. The green shaded region (polytope) are all games winnable with correlated strategies with $1$-bit of shared randomness. The boundaries of the green polytope, {\it i.e.} $\gamma_i=0~\mbox{and}~2/3,~\mbox{for}~i\in\{1,2,3\}$, are the only games winnable with classical mixed strategies.}\label{fig4}
\end{figure}

\subsubsection{Even Restaurant: unwinnable game}
We have already seen that all the games $\mathbb{H}^n(1/n)$ are perfectly winnable with classical mixed strategies whenever $n$ is even. Naturally, the question arises whether all even Restaurant games are winnable with such strategies when generic cases are considered. We answer this question negatively by constructing an explicit example of such a game. Note that Eq.(\ref{eq9}) yields,
\begin{align}
\sum_{m\in X}\gamma_m=\frac{|Y|}{n},~~~\&~~~\sum_{m\in Y}\gamma_m=\frac{|X|}{n}.\label{eq13}   
\end{align}
Now it becomes easy to come up with an even Restaurant game that is impossible to win using a mixed strategy. For instance, consider the $4$-Restaurant game specified by $\gamma_1=2/5$ and $\gamma_2=\gamma_3=\gamma_4=1/5$. No non-empty disjoint partitioning of the four Restaurants satisfies the conditions in Eq.(\ref{eq13}). Thus the game $\mathbb{H}^4\left(\frac{2}{5},\frac{1}{5},\frac{1}{5},\frac{1}{5}\right)$ is not winnable with classical mixed strategies.

\subsection{Games winnable with classical correlated strategies}\label{1bit1SR}

We have already seen that only a subclass of games $\mathbb{H}^n(\gamma_1,\cdots,\gamma_n)$ can be won with mixed strategies. If Alice and Bob are allowed to share classical shared randomness, then they can follow any correlated strategy and can win any of the game $\mathbb{H}^n(\gamma_1,\cdots,\gamma_n)$. For instance, consider the $3$-Restaurant case first. Alice and Bob can share a random variable $\lambda \in \{1,2,3\}$ which determines the partitions $R_\lambda$ that they use in a particular run. Using $R_1$ they can win all games of the form $\left(\gamma_1=\frac{2}{3},\gamma_2,\gamma_3\right)$, using $R_2$ they can win all games of the form $\left(\gamma_1,\gamma_2=\frac{2}{3},\gamma_3\right)$, and using $R_3$ they can win all games of the form $\left(\gamma_1,\gamma_2,\gamma_3=\frac{2}{3}\right)$. Therefore, using shared randomness they can win all games of the form,
\begin{align*}
r_1 \left(\frac{2}{3},a_2,a_3\right)+r_2\left(b_1,\frac{2}{3},b_3\right)+r_3\left(c_1,c_2,\frac{2}{3}\right),
\end{align*}
where $a_2+a_3=b_1+b_3=c_1+c_2=\frac{1}{3}$ and $r_1+r_2+r_3=1$ and these cover all possible games $\mathbb{H}^3(\gamma_1,\gamma_2,\gamma_3)$. 

The above strategy uses $\log3$ bits of shared randomness. However, a more efficient protocol is possible if we look into the geometry of the game space. The set of all possible games $\mathbb{H}^3(\gamma_1,\gamma_2,\gamma_3)$ forms a polytope embedded in the 2-simplex as shown in Fig.\ref{three-Restaurant-game}. All games that are winnable through a strategy corresponding to the partition $R_1$ form the facet $\gamma_1=\frac{2}{3}$ of the green polytope. Similarly, partitions $R_2$ and $R_3$ form facets that correspond to $\gamma_2=\frac{2}{3}$ and $\gamma_3=\frac{2}{3}$, respectively. Any game lying within the green polytope can be won by taking a suitable convex mixture of two points from the facets $\gamma_i=0$ and $\gamma_i=\frac{2}{3}$; $i\in\{1,2,3\}$. Therefore $1$-bit of shared randomness along with $1$-bit classical channel suffice to construct a perfect strategy for any of the games $\mathbb{H}^3(\gamma_1,\gamma_2,\gamma_3)$. 

Moving to the general case, any of the games $\mathbb{H}^n(\gamma_1,\cdots,\gamma_n)$ can be perfectly won with $1$-bit of classical communication if sufficient amount of shared randomness is allowed. A trivial upper bound can be obtained from the convex structure of game space. For general $n$, the game space is a convex polytope embedded in $\mathbb{R}^n$. The extreme points of this polytope are the games specified by the vectors of the form $\left(\frac{n-1}{n},\frac{1}{n},0,\cdots,0\right)\in\mathbb{R}^{n}$, and its all possible permutations, total $n(n-1)$ in number. All such games can be won with just $1$-bit of classical communication without requiring any shared randomness. For instance, while playing the game $\mathbb{H}^n\left(\frac{n-1}{n},\frac{1}{n},0,\cdots,0\right)$, Alice will send `$0$' if the first Restaurant is closed, else she sends `$1$'. Bob visits the second Restaurant when he receives `$0$', else he visits the first Restaurant. It is not hard to see that both the conditions ($\mathrm{h}^\prime1$) and ($\mathrm{h}^\prime2$) are satisfied with this strategy. Since any vector $(\gamma_1,\cdots,\gamma_n)\in\mathbb{R}^{n}$ specifying the game $\mathbb{H}^n(\gamma_1,\cdots,\gamma_n)$ can always be expressed as a convex mixture of aforesaid $n(n-1)$ extreme points, therefore $\log (n^2-n)$-bit of shared randomness, along with $1$-bit of communication, will suffice for winning any of the games $\mathbb{H}^n(\gamma_1,\cdots,\gamma_n)$. However, this seems an extreme overestimate, and we believe that the games can be won with much less amount of shared randomness. In fact, it would be quite interesting to show that any such game can be won with $1$-bit of communication when aided with just $1$-bit of shared randomness.  The above result can be summarized as the following theorem.
\begin{theorem}\label{theo1}
Shared randomness can increase the utility of a perfect 1 cbit classical channel in H-FW scenario.
\end{theorem}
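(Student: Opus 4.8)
The plan is to prove the theorem by producing a single witness game that is \emph{strictly} more useful when the noiseless $1$-bit channel is aided by pre-shared classical randomness than when it is used with local randomness alone. I would take the utility of the channel to be the maximal value of a payoff $\beta:P(\mathcal{B}|\mathcal{X})\to\mathbb{R}$ whose maximum is attained exactly on those correlations that satisfy $(\mathrm{h}1^\prime)$ and $(\mathrm{h}2^\prime)$ --- for instance $\beta=1$ on such correlations and $\beta<1$ otherwise --- so that ``winning the game perfectly'' coincides with ``achieving the maximal utility.'' The natural witness is $\mathbb{H}^3(1/3)$, for which every needed computation has already been carried out in the three-Restaurant analysis above.

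First I would settle the impossibility side. The classification of the three-Restaurant case shows that a classical mixed strategy --- the most general protocol realizable with a $1$-bit channel and purely local randomness --- can perfectly win $\mathbb{H}^3(\gamma_1,\gamma_2,\gamma_3)$ only when one of the $\gamma_i$ equals $0$ or $2/3$. Since $\mathbb{H}^3(1/3)$ has $\gamma_1=\gamma_2=\gamma_3=1/3$, none of which lies in $\{0,2/3\}$, no mixed strategy reaches the target correlation, and the utility of the bare channel is therefore strictly below the winning value.

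Next I would establish achievability with shared randomness. Introducing a shared variable $\lambda\in\{1,2,3\}$ drawn uniformly, Alice and Bob agree to run the partition $R_\lambda$ of Table~\ref{table2} in the corresponding round, splitting the residual weight $1/3$ symmetrically within each partition (so $R_1$ realizes $(2/3,1/6,1/6)$, and cyclically for $R_2,R_3$). Averaging over $\lambda$ gives $p(m_b)=\tfrac{1}{3}\big(\tfrac{2}{3}+\tfrac{1}{6}+\tfrac{1}{6}\big)=\tfrac{1}{3}$ for every $m$, while $(\mathrm{h}1^\prime)$ holds in every round by construction; hence the game is won perfectly. This uses only $\log 3$ bits of shared randomness, and in fact $1$ bit suffices, since $(1/3,1/3,1/3)=\tfrac12(2/3,1/6,1/6)+\tfrac12(0,1/2,1/2)$ is a convex combination of two mixed-strategy-winnable facet points. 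With shared randomness the channel thus attains the maximal utility it provably cannot reach without it, completing the separation.

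The two checks above are routine; the genuinely load-bearing point --- and the main obstacle --- is conceptual rather than computational: one must fix the correct baseline so that the comparison is fair. The channel (one noiseless cbit) must be identical in both settings, local randomness must be regarded as freely available to each party (so that mixed strategies, not deterministic ones, form the no-shared-randomness benchmark), and shared randomness must be the \emph{only} added resource. Once this baseline is pinned down in the H-FW sense, the strict gap between the mixed-strategy value and the correlated-strategy value of $\mathbb{H}^3(1/3)$ yields exactly the claimed enhancement.
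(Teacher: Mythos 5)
Your proposal is correct and follows essentially the same route as the paper: the paper's own argument also rests on the witness $\mathbb{H}^3(1/3)$, which the mixed-strategy classification (some $\gamma_i\in\{0,2/3\}$ required) rules out for the bare channel, together with the $R_\lambda$-partition correlated strategy and the observation that convexity of the game polytope reduces the shared-randomness cost to $1$ bit. Your explicit decomposition $(1/3,1/3,1/3)=\tfrac12(2/3,1/6,1/6)+\tfrac12(0,1/2,1/2)$ checks out and matches the paper's geometric argument.
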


A game $H^n(\gamma_1,\cdots,\gamma_n)$ winnable with some classical correlated strategy but not winnable with any classical mixed strategy establishes the fact that classical shared randomness can empower utility of a perfect but limited classical communication line in the Holevo \& Frenkel-Weiner kind of scenario. Important to note that the single-shot utility of a perfect classical channel gets empowered with shared randomness. This, in a sense, can be thought as a classical version of `quantum superdense coding' phenomenon where also single-shot communication utility of a perfect quantum channel gets enhanced with preshared entanglement. Of course, there is a crucial difference between these two. In the case of quantum superdense coding channel's utility is quantified through classical mutual information between senders and receivers data which can be increased with preshared entanglement. In the classical scenario, shared randomness cannot increase the mutual information. Therefore in the classical scenario, the usefulness of shared randomness in enhancing the single-shot utility of a perfect channel must be analyzed with some payoff different from classical mutual information. The winning condition of our Restaurant games stands for one such payoff. Our Restaurant game is only a particular example of tasks that establish such nontrivial communication utility of classical shared randomness. Nonetheless, it motivates further research to explore such novel usefulness of shared randomness in single-shot paradigm. In the following we, however, proceed to analyze the Restaurant games with quantum resource. 

\section{Communication advantage of quantum system} \label{quantum strategy}
When playing the Restaurant games in quantum scenario, Alice can send a qubit system to Bob instead of a classical bit. Like classical mixed strategies, no classical shared randomness is allowed between Alice and Bob. In such a case, a quantum strategy can be defined as follows.     
\begin{definition}
[Quantum strategy] A quantum strategy is a encoding-decoding tuple $(\mathrm{E}_q,\mathrm{D}_q)$, where $\mathrm{E}_q$ is a `$\log n$-bit to $1$-qubit' bijective function and  $\mathrm{D}_q$ is a $n$ outcome POVM ,{\it i.e.} $\mathrm{E}_q:i\mapsto\rho_i\in\mathcal{D}(\mathbb{C}^2)$, with $i\in\{1,\cdots,n\}$; and $\mathrm{D}_q:\{\pi_j~|~\pi_j\ge0~\&~\sum_{j=1}^n\pi_i=\mathbb{I}\}$, where $\mathbb{I}$ is the identity operator on $\mathbb{C}^2$.
\end{definition}
Bob makes the decision to visit the Restaurants based on his measurement outcomes. Here we assume that the quantum communication line is perfect, {\it i.e.} the qubit state, Alice intends to send Bob, does not get interrupted with noise. Noisy case analysis we defer to the later section. Interestingly, we will show that there exist quantum strategies that are advantageous over classical mixed strategies. 

\subsection{Special case: $\mathbb{H}^n(1/n)$} \label{3_Restaurant__quantum_strategy_special_case}
Let us first consider the case $n=3$. For encoding, Alice chooses a symmetric set of pure states lying on an equilateral triangle from a great circle of the Bloch sphere. When the $k^{th}$ Restaurant is closed, she sends the state $\rho_k=\ket{\psi_k}\bra{\psi_k}=\frac{1}{2}\left[\mathbb{I}+\hat{n}_k\cdot \vec{\sigma}\right]$, where  $\hat{n}_k$ is the Bloch vector $\left(\sin\frac{2\pi (k-1)}{3},0,\cos\frac{2\pi (k-1)}{3}\right)^{\mathrm{T}}\in\mathbb{R}^3$; $k\in\{1,2,3\}$ and $\mathrm{T}$ denotes transposition. For decoding, Bob performs the measurement $\mathcal{M}(3)\equiv\left\{\pi_k:=\frac{1}{3}\left[\mathbb{I}-\hat{n}_k\cdot \vec{\sigma}\right]\right\}_{k=1}^3$, and visits the $k^{th}$ Restaurant if outcome corresponding to the effect $\pi_k$ is observed. With this strategy we have $p(i_b|i_c)=\tr[\rho_i\pi_i]=0,~\forall~i$, ensuring the condition ($\mathrm{h}1$). Furthermore, Bob's probability of visiting $m^{th}$ Restaurant turns out to be,
\footnotesize
\begin{align}
p(m_b) &= \frac{1}{3}\sum_{k} p(m_b|k_c)= \frac{1}{3}\sum_{k} \Tr\left[\frac{1}{2}\left[\mathbb{I}+\hat{n}_k\cdot \vec{\sigma}\right]\frac{1}{3}\left[\mathbb{I}-\hat{n}_m\cdot \vec{\sigma}\right]\right] \nonumber\\
&= \frac{1}{9}\sum_{k} \left[1-\hat{n}_k\cdot\hat{n}_m\right] = \frac{1}{9}\left[3-\left(\sum_k\hat{n}_k\right)\cdot\hat{n}_m\right]=\frac{1}{3},\nonumber
\end{align}
\normalsize
\begin{figure}[t!]
\centering
\includegraphics[width=0.45\textwidth]{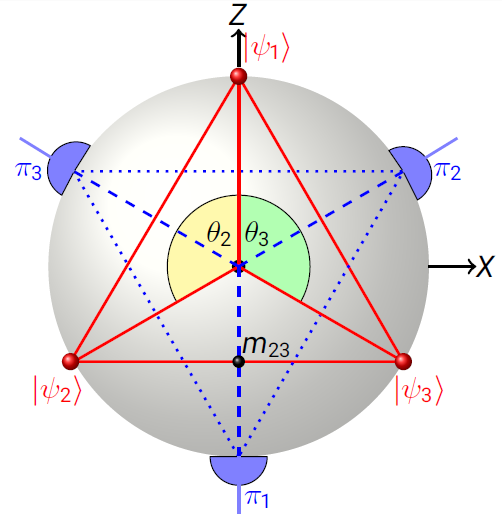}
\caption{(Color online) Quantum strategy for $\mathbb{H}^3(\gamma_1,\gamma_2,\gamma_3)$. Alice sends the qubit $\ket{\psi_k}$ when the $k^{th}$ Restaurant is closed. Red dots denote Alice's encodings. Blue half-circles denote the rank-one effects corresponding to  Bob decoding. Blake dot denotes the mid-point $m_{23}$ of the cord joining the Bloch vectors $\psi_2~\&~\psi_3$. Here the strategy is shown for $\gamma_1=\gamma_2=\gamma_3=1/3$, {\it i.e.}, for the game $\mathbb{H}^3(1/3)$. For other cases, $\theta_2$ and $\theta_3$ need to be varied accordingly.}\label{fig5}
\end{figure}
which ensures the condition ($\mathrm{h}2$). The strategy can be generalized for any of the games $\mathbb{H}^n(1/n)$ with odd $n$. Recall that, for even $n$ the game $\mathbb{H}^n(1/n)$ is winnable with classical mixed strategy. For an arbitrary odd $n$, Alice uses the encoding $\rho_k=\frac{1}{2}\left[\mathbb{I}+\hat{n}_k\cdot \vec{\sigma}\right]$, where $\hat{n}_k$'s are symmetrically distributed on the equatorial plane. Bob performs the measurement $\mathcal{M}(n)\equiv\left\{\pi_k:=\frac{1}{n}\left[\mathbb{I}-\hat{n}_k\cdot \vec{\sigma}\right]\right\}_{k=1}^n$ and visits the $k^{th}$ Restaurant if outcome corresponding to the effect $\pi_k$ is observed. A similar calculation as before ensures both the conditions ($\mathrm{h}1$) and ($\mathrm{h}2$).

\subsection{Three-Restaurant general case:
$\mathbb{H}^3(\gamma_1,\gamma_2,\gamma_3)$}

In this general case, to fulfill the condition ($\mathrm{h}1^\prime$), Alice must choose some pure states for encoding. Let she sends the state $\ket{\psi_i}$ to Bob when the $i^{th}$ Restaurant is closed. To satisfy the condition ($\mathrm{h}1^\prime$) Bob must perform a decoding measurement of the form $\mathcal{M}=\left\{\alpha_i\ketbra{\psi_i^{\perp}}{\psi_i^{\perp}}~|~\alpha_i>0~\&~\sum_i\alpha_i=2\right\}_{i=1}^3$ and he visits the $i^{th}$ Restaurant if the $i^{th}$ effect clicks. To satisfy this requirement the completely mixed state must lie within the triangle formed by Bloch vectors corresponding to the encoding states $\{\ket{\psi_i}\}_{k=1}^3$. Without loss of any generality, Alice can choose her encodings as $\psi_1=(0,0,1)^{\mathrm{T}},~\ket{\psi_2}=(-\sin\theta_2,0,\cos\theta_2)^{\mathrm{T}}$ and $\ket{\psi_3}=(\sin\theta_3,0,\cos\theta_3)^{\mathrm{T}}$; where $\psi_i$ is the Bloch vector of the state $\ket{\psi_i}$ and $\theta_2,\theta_3\in[0,\pi]$\footnote{Note that $\theta_2+\theta_3$ cannot be lesser than $\pi$,  since this configuration would not form a valid measurement.} are the polar angles for the corresponding Bloch vectors (see Fig.\ref{fig5}). Accordingly, the requirements of the condition ($\mathrm{h}1^\prime$) read as,
\begin{figure}[b!]
\centering
\includegraphics[width=0.4\textwidth]{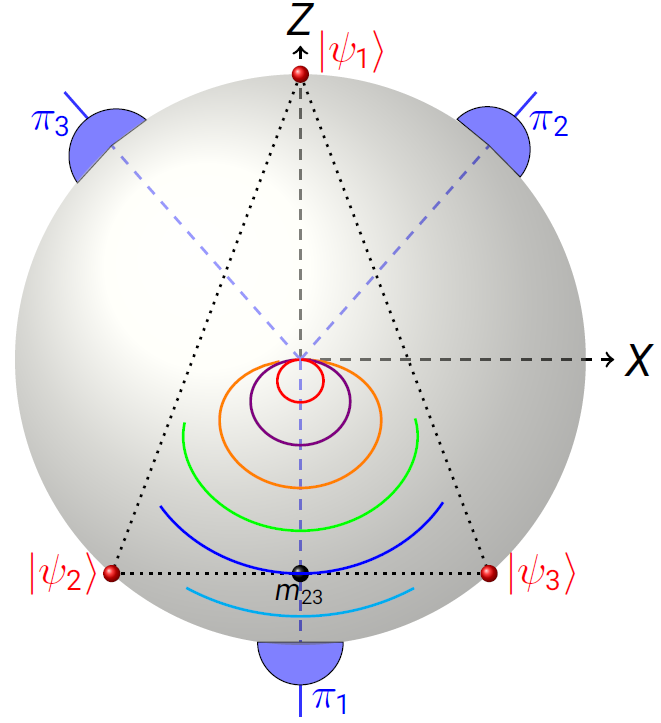}
\caption{(Color online) This figure shows locus of the midpoints $m_{23}$ having constant $\gamma_1$. Once $\gamma_1$ is fixed, $m_{23}$ completely specifies the value of $\gamma_2$ and $\gamma_3$. Thus it is sufficient to plot constant $\gamma_1$ curves. Here, $\gamma_1=0.6,~0.5,~0.4,~0.3,~0.2, \&~0.1$ curves are plotted. The black dotted triangle shows an explicit strategy for the game $\mathbb{H}^3\left(0.5,0.25,0.25\right)$. As the black dot moves on the blue curve we get strategies for the games of the form $\mathbb{H}^3\left(0.5,\frac{p}{2},\frac{1-p}{2}\right)$, with $p\in[0,1]$. The leftmost point on the blue curve corresponds to the game $\mathbb{H}^3\left(0.5,0,0.5\right)$ while the rightmost point corresponds to $\mathbb{H}^3\left(0.5,0.5,0\right)$.}\label{fig6}
\end{figure}
\begin{subequations}
\begin{align}
\alpha_1+\alpha_2+\alpha_3&=2,\\
\alpha_1+\alpha_2\cos\theta_2+\alpha_3\cos\theta_3&=0,\\
-\alpha_2\sin\theta_2+\alpha_3\sin\theta_3&=0.
\end{align}
\end{subequations}
These further lead to,
\begin{subequations}
\begin{align}
\label{eq16}
\alpha_1&=\frac{2\sin(\theta_2+\theta_3)}{\sin(\theta_2+\theta_3)-\sin\theta_2-\sin\theta_3},\\
\alpha_2&=\frac{-2\sin\theta_3}{\sin(\theta_2+\theta_3)-\sin\theta_2-\sin\theta_3},\\
\alpha_3&=\frac{-2\sin\theta_2}{\sin(\theta_2+\theta_3)-\sin\theta_2-\sin\theta_3},
\end{align}
\end{subequations}
and accordingly, we have,
\begin{align}
\label{eq17}
\gamma_1&= \frac{1}{3}(p(1|2)+p(1|3))\nonumber\\
&= \frac{1}{3}\Tr[\alpha_1\ketbra{\psi_1^{\perp}}{\psi_1^{\perp}}\left(\ketbra{\psi_2}{\psi_2}+\ketbra{\psi_3}{\psi_3}\right)]\nonumber\\
&= \frac{1}{3}\frac{\sin(\theta_2+\theta_3)(2-\cos\theta_2-\cos\theta_3)}{(\sin(\theta_2+\theta_3)-\sin\theta_2-\sin\theta_3)}.
\end{align}
The aforesaid encoding can be uniquely specified by specifying the state $\ket{\psi_1}=\ket{0}$ and fixing the midpoint $m_{23}$ of the line joining Bloch vectors of $\ket{\psi_2}$ and $\ket{\psi_3}$ (see Fig.\ref{fig5}). This is because for any point within a great circle (except the center) there exists a unique chord having that point as the midpoint of the chord. Thus choosing the encoding state $\ket{\psi_1}=\ket{0}$ when the first Restaurant is closed, the complete encoding is specified just by the midpoint $m_{23}$, where the Bloch vector corresponding to the Restaurant $2$ lies on the left side. Now, Eq.(\ref{eq17}) can be used to plot locus of the midpoints $m_{23}$ with constant $\gamma_1$. Such a plot is shown in Fig.\ref{fig6}. We thus conclude that all games $\mathbb{H}^3(\gamma_1,\gamma_2,\gamma_3)$ are perfectly winnable through some quantum strategy. This leads us to the following theorem.
\begin{theorem}\label{theo2}
    In the absence of any preshared correlation utility of 1 qubit communication is higher than 1 bit of classical communication in H-FW scenario.
\end{theorem}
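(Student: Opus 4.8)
The plan is to prove the theorem comparatively. Since any classical $1$-bit strategy is reproduced by a qubit strategy that encodes the bit in the orthogonal states $\{\ket{0},\ket{1}\}$ and decodes with a $\sigma_z$ measurement, the optimal $1$-qubit value is never below the optimal $1$-bit classical value for any task; it therefore suffices to exhibit a single task, together with an admissible payoff $\beta$, on which the $1$-qubit value (with no preshared correlation) strictly exceeds the $1$-bit classical value (again with no shared randomness). The natural candidate is the game $\mathbb{H}^3(1/3)$, recast in the H-FW language of Fig.~\ref{fig1}: the referee's input is the closed Restaurant $x\equiv k_c$, drawn uniformly from $\{1,2,3\}$; Bob's output is the visited Restaurant $b\equiv m_b$; and $\beta$ is the functional of the generated correlation $\{p(m_b|k_c)\}$ that returns its maximal value exactly when both $(\mathrm{h}1)$ and $(\mathrm{h}2)$ hold and a strictly smaller value otherwise. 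Crucially, $\beta$ is a genuine functional of the whole correlation rather than an additive reward $f(x,b)$; making this distinction explicit is the conceptual heart of the argument, since it is precisely what lets the comparison escape the Frenkel--Weiner no-go.

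For the quantum side I would simply invoke the construction of the preceding subsection. With the symmetric encoding $\rho_k=\tfrac12[\mathbb{I}+\hat{n}_k\cdot\vec{\sigma}]$ on a great circle and the decoding POVM $\mathcal{M}(3)=\{\pi_k=\tfrac13[\mathbb{I}-\hat{n}_k\cdot\vec{\sigma}]\}$, one computes $p(m_b|k_c)=\tr[\rho_k\pi_m]=\tfrac12(1-\delta_{mk})$, so that every diagonal entry vanishes and every marginal equals $1/3$. Hence $(\mathrm{h}1)$ and $(\mathrm{h}2)$ are both met, and the $1$-qubit strategy attains the maximal value of $\beta$.

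For the classical side I would appeal to the analysis of the three-Restaurant case: the only $\mathbb{H}^3(\gamma_1,\gamma_2,\gamma_3)$ games winnable by a classical mixed strategy, i.e.\ with $1$-bit of communication and only local randomness, are those for which some $\gamma_i\in\{0,2/3\}$. Since $\mathbb{H}^3(1/3)$ sits strictly inside the polytope and has no coordinate equal to $0$ or $2/3$, no classical mixed strategy can satisfy both winning conditions, so none attains the maximal value of $\beta$. Combining the two sides then yields the strict separation claimed in the theorem.

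The step needing the most care is turning ``not winnable'' into a strict numerical gap. The cleanest route is to take $\beta$ to be the win/lose indicator (value $1$ iff $(\mathrm{h}1)$ and $(\mathrm{h}2)$ hold, $0$ otherwise), whence the classical value is exactly $0$ while the quantum value is $1$. For a robustness statement I would instead note that the set of correlations reachable by classical mixed strategies is the continuous (bilinear) image of a compact product of strategy simplices, hence compact, and does not contain the target correlation $p(m_b|k_c)=\tfrac12(1-\delta_{mk})$; it is therefore bounded away from it, so any continuous $\beta$ peaked at the target leaves a strictly positive quantum advantage. I expect this compactness-and-separation bookkeeping, rather than either the quantum construction or the classical impossibility (both already in hand), to be the only real obstacle.
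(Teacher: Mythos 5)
Your proposal is correct and follows essentially the same route as the paper: the symmetric trine encoding with the POVM $\left\{\pi_k=\tfrac13\left[\mathbb{I}-\hat n_k\cdot\vec\sigma\right]\right\}$ wins $\mathbb{H}^3(1/3)$ perfectly, while the earlier classical analysis shows that only games with some $\gamma_i\in\{0,2/3\}$ are winnable by mixed strategies, so the strict separation is immediate. Your additional bookkeeping --- the win/lose indicator payoff, the compactness argument giving a robust gap, and the explicit qubit-simulates-bit direction --- is sound but goes beyond what the paper's own proof records.
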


\subsection{Quantum strategy for  $\mathbb{H}^4\left(\frac{2}{5},\frac{1}{5},\frac{1}{5},\frac{1}{5}\right)$}
In case of any four Restaurant game of the form $\mathbb{H}^4\left(\gamma_1,\frac{1-\gamma_1}{3},\frac{1-\gamma_1}{3},\frac{1-\gamma_1}{3}\right)$ with $\gamma_1>0$, Alice sends the state $\ket{\psi_i}$ to Bob when the $i^{th}$ Restaurant is closed. Without loss of generality, she can choose $\ket{\psi_1}\equiv(0,0,1)^{\mathrm{T}}$, and due to symmetry of the visiting probability ($\gamma_2=\gamma_3=\gamma_4$) the encoding of other three state will orient symmetrically on a constant $z$ plane. One can orient $\ket{\psi_2}$ along the $x$ axis on that constant $z$ plane. However, once $\ket{\psi_2}$ is chosen the encodings for $\ket{\psi_3}$ and $\ket{\psi_4}$ are fixed. Let Alice's encodings are 
\begin{align*}
1&\mapsto \ket{\psi_1}\equiv(0,0,1)^{\mathrm{T}},\\
2&\mapsto \ket{\psi_2}\equiv\left(\sin{\theta},0,\cos{\theta}\right)^{\mathrm{T}},\\
3&\mapsto \ket{\psi_3}\equiv\left(-\frac{1}{2}\sin\theta,\frac{\sqrt{3}}{2}\sin\theta,\cos\theta\right)^{\mathrm{T}},\\
4&\mapsto \ket{\psi_4}\equiv\left(-\frac{1}{2}\sin\theta,-\frac{\sqrt{3}}{2}\sin\theta,\cos\theta\right)^{\mathrm{T}}.
\end{align*}
To satisfy the condition ($\mathrm{h}1^\prime$), Bob must perform a decoding measurement of the form $\mathcal{M}=\left\{\alpha_i\ketbra{\psi_i^{\perp}}{\psi_i^{\perp}}~|~\alpha_i>0~\&~\sum_i\alpha_i=2\right\}_{i=1}^4$ and he visits the $i^{th}$ Restaurant when  $i^{th}$ effect clicks. With this encodings and decoding, the condition ($\mathrm{h}2^\prime$) further demands
\begin{subequations}
\begin{align}
\alpha_1+\alpha_2+\alpha_3+\alpha_4=&2,\\
(2\alpha_2-\alpha_3-\alpha_4)\times\sin\theta=&0,\\
(\alpha_3-\alpha_4)\times\sin\theta=&0,\\
\alpha_1+(\alpha_2+\alpha_3+\alpha_4)\times\cos\theta=&0,\\
(1-\cos\theta)\times\alpha_1=&\frac{8}{3}\gamma_1.
\end{align}
\end{subequations}
Solving this set of equations for $\gamma_1=\frac{2}{5}$ we obtain $\alpha_1=\frac{16}{23},\alpha_2=\alpha_3=\alpha_4=\frac{10}{23}$ and $\cos\theta=-\frac{8}{15}$, which completely specify the perfect strategy with qubit communication.

This shows only a particular example where qubit communication yields the perfect strategy for the four-Restaurant case. It might be interesting to analyze the general four-Restaurant scenario $\mathbb{H}^4(\gamma_1,\gamma_2,\gamma_3,\gamma_4)$. In fact, it would be worth interesting to classify the games $\mathbb{H}^4(\gamma_1,\cdots,\gamma_n)$ that allow perfect qubit strategies. This question we left for future research. In the next, we rather move to analyze the source of the obtained advantage from a more foundational perspective. 

\section{Origin of the advantage} 
Quantum advantages are hard to find and even harder to prove. In this section, we will analyze which particular non-classical features underlies to the aforesaid communication advantage we have obtained. We will also analyze whether similar advantages are possible or not with some hypothetical non-classical toy systems.  
\subsection{Two no-go results in quantum scenario} \label{sec:quantum_no_go}
In search of the non-classical features of quantum theory that exhibit the above advantage in the communication scenario, here we present two important \textit{no-go} results.
\begin{proposition}
A qudit is no better than a c-dit, in H-FW scenario, if Alice encodes the inputs in d-dimensonal commuting density operators.
\end{proposition}
\begin{proof}
Let us consider that Alice uses \textit{n} commuting density operator, {\it i.e.} $\{\rho_1,\rho_2,\cdots,\rho_n\}$, where $\rho_i\in \mathcal{D}(\mathbb{C}^d)$ for her encoding. let the common eigenstate of the commuting density operators be $\{\ket{\phi_1},\ket{\phi_2},\cdots,\ket{\phi_d}\}$, then the encoded density operators can be express as $\rho_i=\sum_{j=1}^d C_{ij}\ket{\phi_j}\bra{\phi_j}$. After receiving the encoded qudit, the receiver performs an \textit{n}-outcome positive operator valued measure (POVM) $\mathcal{M}\equiv\{E_{k}: \sum_{k}E_{k}=\mathbb{I}\}_{k-1}^n$ to decide among \textit{n}-possibilities. The probability of making $k^{\text{th}}$ decision by the receiver is $p(E_{k}|i)=\text{Tr}(E_{k}\rho_i)=\sum_{j=1}^dC_{ij}\text{Tr}(E_{k}\ketbra{\phi_j}{\phi_j})=\sum_{j=1}^dC_{ij}~p(E_{k}|\phi_j)$, when he receives the state $\rho_i$; $i\in\{0,1,\cdots,n\}$.  In the analogous classical scenario, depending upon the input received from Refree, Alice chooses one of the \textit{d}-faced coin $\mathcal{C}_{i}$ with the bias $C_{ij}$ where $i\in\{1,2,\cdots,n\};j\in\{1,2,\cdots,d\}$. Depending on the outcome of the coin flip $\mathcal{C}_{i}$, Alice will send Bob a classical digit ranging from 0 to d, instead of the qudit $\{\rho_1,\rho_2,\cdots,\rho_d\}$. Bob will then choose one of the n-faced coins $\mathcal{\Tilde{C}}_{i}$ based on the digit he received from Alice. Each of these coins has a bias $\{p(E_{1}|\phi_i),\cdots,p(E_{n}|\phi_i)\}$, where $i\in\{1,2,\cdots,n\}$. Therefore, the probability generated by commuting qudits encoding can always be simulated by a d-bit classical communication. This completes the proof.
\end{proof}

The above result proves the necessity of non-commuting encoding to exhibit the advantage of qudit communication over the $d$-bit classical channel in the Restaurant game $\mathbb{H}^n(\gamma_1,\cdots,\gamma_n)$. Our next result deals with the decoding part at Bob's end.
\begin{proposition}
An advantage of a qudit over its classical counterpart in the H-FW scenario requires the non-commutativity of the measurement operators at the receiver's end for decoding. 
\end{proposition}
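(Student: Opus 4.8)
The plan is to prove the contrapositive: if Bob decodes with a genuine projective measurement (PVM), then the qubit protocol can be reproduced using $1$-bit of classical communication together with local randomness, so no advantage can survive. The starting point is purely dimensional. On the qubit space $\mathbb{C}^2$ a PVM consists of mutually orthogonal projectors summing to $\mathbb{I}$; since their ranks must add up to $2$, any nontrivial PVM has exactly two nonzero outcomes $\{P_0,P_1\}$ with $P_1=\mathbb{I}-P_0$. Consequently Bob's measurement produces a single \emph{binary} random variable, and his choice of Restaurant is necessarily a (possibly randomized) classical post-processing $f(m\,|\,b)$ of that bit $b\in\{0,1\}$. (A private random mixture of distinct PVMs would effectively be a non-projective POVM, so here ``projective'' is understood to mean one fixed PVM.)

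Next I would exploit that in the H-FW scenario the whole protocol -- Alice's encoding $i\mapsto\rho_i$ and Bob's fixed decoding $\{P_0,P_1\}$ together with the post-processing $f$ -- is agreed upon in advance, and hence fully known to Alice. Therefore Alice can \emph{pre-compute}, for every possibly-closed Restaurant $i$, the outcome probability $p(0|i)=\Tr[\rho_i P_0]$. This is the decisive step that turns a quantum decoding into classical data already available at the sender, and it mirrors the reasoning of the preceding proposition on the encoding side.

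The simulation then runs exactly as before. When the $i^{\text{th}}$ Restaurant is closed, Alice tosses a biased coin and transmits $b=0$ with probability $\Tr[\rho_i P_0]$ and $b=1$ with probability $\Tr[\rho_i P_1]$ -- a legitimate classical mixed strategy needing only local randomness and the single allowed cbit. On receiving $b$, Bob applies precisely the post-processing $f(\cdot|b)$ he would have applied to the measurement outcome $b$. The induced correlation is $p(m_b|i_c)=\sum_{b\in\{0,1\}}\Tr[\rho_i P_b]\,f(m|b)$, which coincides term-by-term with the statistics of the quantum strategy. Since the two strategies generate identical conditional distributions $P(\mathcal{B}|\mathcal{X})$, they yield identical values for \emph{any} payoff $\beta$, so projective decoding can never beat the classical channel.

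The only genuine subtlety -- and the step I would treat most carefully -- is the first one: establishing rigorously that ``projective measurement on a qubit'' collapses to a two-outcome object, and closing the loophole in which Bob smuggles in extra effective outcomes by privately randomizing over several PVMs (which is exactly the non-projective regime the proposition excludes). Once this binary-outcome reduction is pinned down, the remainder is the same sender-side coin-tossing construction used for orthogonal encodings, and the matching of the two correlations is a one-line verification.
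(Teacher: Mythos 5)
Your proof is correct and follows essentially the same route as the paper's: reduce a qubit PVM to a two-outcome measurement, let Alice locally sample the Born probabilities $\Tr[\rho_i P_b]$ and send the resulting bit, and let Bob reuse his post-processing, so the classical strategy reproduces $P(\mathcal{B}|\mathcal{X})$ exactly. Your version is somewhat more explicit than the paper's (the rank argument forcing two outcomes, and the remark that a private randomization over several PVMs already constitutes a non-projective POVM and hence falls outside the hypothesis), but the core simulation argument is identical.
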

\begin{proof}
Let the sender uses one among $n$-qudits $\{\ket{\psi_{1}},\ket{\psi_{2}},\cdots,\ket{\psi_{n}}\}$ to encode her messages and sends the encoded state to the receiver. Consider that Bob performs a \textit{n}-outcome positive operator valued measure (POVM) $\mathcal{M}\equiv\{E_{k}: \sum_{k}E_{k}=\mathbb{I}\}_{k-1}^n$ to decide among \textit{n}-possibilities, where all the measurement operators {\it i.e.} $E_k$ commute. So the measurement operators can be expressed as $E_k=\sum_{i=1}^dC_{ki}\ket{\phi_i}\bra{\phi_i}$, where $\ket{\phi_i}$ are the common eigenstate of the measurement operators $E_k$. The encoding states $\ket{\psi_j}$ can be expressed in this eigen basis {\it i.e.} $\ket{\psi_j}=\sum_{i=1}^d\alpha_{ji}\ket{\phi_i}$.  In the classical counterpart, the sender can replace the quantum state $\ket{\psi_{j}}$ by a source $S_{j}$ generating classical random variables $\{0,1,\cdots,n\}$ with probabilities $\{|\alpha_{j1}|^{2},|\alpha_{j2}|^{2},\cdots,|\alpha_{jd}|^{2}\}$. After receiving the classical random variable Bob will choose one of the \textit{n}-faced coins $\mathcal{C}_i$ with $C_{ki}$ bias and the receiver can simulate the quantum probabilities obtained through commuting decoding. This proves the claim.
\end{proof}

The quantum advantage in our Restaurant game, therefore, must invoke `quantum interference' in the form of quantum superposition at encoding and noncommutivity at decoding steps. This is quite similar the communication advantage as obtained in \cite{Ambainis99,Ambainis02,Spekkens09,Banik15,Czekaj17,Ambainis19,Horodecki19,Saha19,Vaisakh21,Naik21} within the W-ANTV scenario. Here, non-commutative measurements, more specifically measurements that are non-jointly measurable, play a crucial role in the decoding step at the receiver's end. Within the H-FW scenario, the entanglement assisted advantage of a perfect classical channel as reported in \cite{Frenkel21} once again uses non-compatible measurements at the decoding step (see Appendix \ref{appendix-a}). On the other hand, the seminal 'quantum superdense coding' protocol \cite{Bennett92} makes smart uses of quantum entanglement both at encoding and decoding (Bell basis measurement) steps (see also \cite{Mermin02}).          

 \subsection{Is the advantage strictly quantum?}\label{polygon}
This question is quite important from a foundational perspective. Recall that while the Seminal result of J. S. Bell establishes `nonlocal' behavior of quantum correlations \cite{Bell66}, later Popescu and Rohrlich report such correlations which are beyond quantum in nature \cite{Popescu94}. On the other hand, communication advantages of a qubit over its classical counterpart in the W-ANTV kind of scenario can also be obtained with a hypothetical non-classical, in fact with better than quantum success \cite{Banik15}. In a similar way, the advantage reported in the present paper can also be obtained with hypothetical non-classical systems known as polygon theories. 

\subsubsection{Polygon theories} This class of models can be specified by the tuple $\mathcal{P}_{ly}(n)\equiv\left(\Omega(n),\mathcal{E}(n)\right)$ \cite{Janotta11}, where $\Omega(n)$ is the state space and $\mathcal{E}(n)$ is the effect space. For a fixed $n$, $\Omega(n)$ is the convex hull of $n$ pure states $\{\omega_i\}_{i=1}^{n}$, where $\omega_i:= \left(r_n\cos(2\pi i/n),r_n\sin(2\pi i/n),1\right)^{\mathrm{T}}\in\mathbb{R}^3$ with $r_n:=\sqrt{\sec(\pi/n)}$. The set $\mathcal{E}(n)$ is the convex hull of the zero effect  $z\coloneqq (0,0,0)^{\mathrm{T}}$, the unit effect $u\coloneqq (0,0,1)^{\mathrm{T}}$ and the extremal effects $\{e_i,\Bar{e_i}=u-e_i\}_{i=1}^{n}$ where,
\begin{table}[h!]
\centering
\begin{tabular}{c||c}
\hline
~~~~~~~~Odd gon~~~~~~~~ & ~~~~~~~~Even gon~~~~~~~~ \\
\hline\hline
&\\
$e_i:=\frac{1}{1+r_n^2}\begin{pmatrix}r_n\cos\frac{2\pi i}{n}\\r_n\sin\frac{2\pi i}{n}\\1\end{pmatrix}$  & $e_i:=\frac{1}{2}\begin{pmatrix}r_n\cos\frac{(2i-1)\pi }{n}\\r_n\sin\frac{(2i-1)\pi }{n}\\1\end{pmatrix}$\\
\hline
\end{tabular}
\end{table}
Although these are just toy models, in the recent past several interesting results have been reported exploring different non-classical aspects of these models which in turn provide better understandings of the Hilbert space structure of quantum theory \cite{Massar14,Muller12,Safi15,Banik19,Saha20,Bhattacharya20}. Here we study these models to explore their communication utility while playing the Restaurant games.  

\subsubsection{Playing $\mathbb{H}^n(1/n)$ with Polygons} 
For a given $n$, Alice choose the system $\mathcal{P}_{ly}(2n)$ and sends the state $s_i=\frac{1}{2}\left(\omega_{2i-1}+\omega_{2i}\right)$ to Bob when $i^{th}$ Restaurant is closed. Note that, unlike the quantum case, here mixed states may be used for encoding. However, the states lies at the boundary of the polygons. For decoding, Bob performs an $n$-outcome measurements $\mathcal{M}(n)\equiv\left\{\frac{2}{n}\Bar{e}_{2k}\right\}_{k=1}^{n}$ and visits $k^{th}$ Restaurant if outcome corresponding to the effect $\frac{2}{n}\Bar{e}_{2k}$ is observed. The fact, $\Bar{e}_{2i}\cdot \omega_{2i-1} =0$ and $\Bar{e}_{2i}\cdot \omega_{2i} =0$ yield $p(i_b|i_c)= \frac{2}{n}\Bar{e}_{2i}\cdot s_i =0$. Furthermore, $p(m_b) = \frac{1}{n}\sum_{k=1}^n p(m_b|k_c)= \frac{1}{n}\sum_{k=1}^n \frac{2}{n}\Bar{e}_{2m}\cdot s_k= \frac{2}{n^2}\Bar{e}_{2m}\cdot \sum_{k=1}^n s_k  =\frac{2}{n^2}\Bar{e}_{2m}\cdot nu = \frac{1}{n}$. Thus both the conditions ($\mathrm{h}1$) and ($\mathrm{h}2$) are satisfied (see Fig.\ref{fig7}). 
\begin{figure}[b!]
\centering
\includegraphics[width=0.415\textwidth]{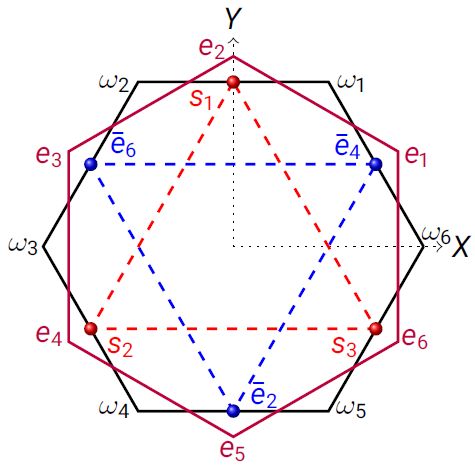}
\caption{(Color online) Strategy for the  $\mathbb{H}^3(1/3)$ game with communication of a single $\mathcal{P}_{ly}(6)$ system from Alice to Bob (without shared randomness). Red dots denote Alice's encodings. Blue dots are the effects proportional to Bob's decoding measurement.}\label{fig7}
\end{figure}
\begin{figure}[t!]
\centering
\includegraphics[width=0.4\textwidth]{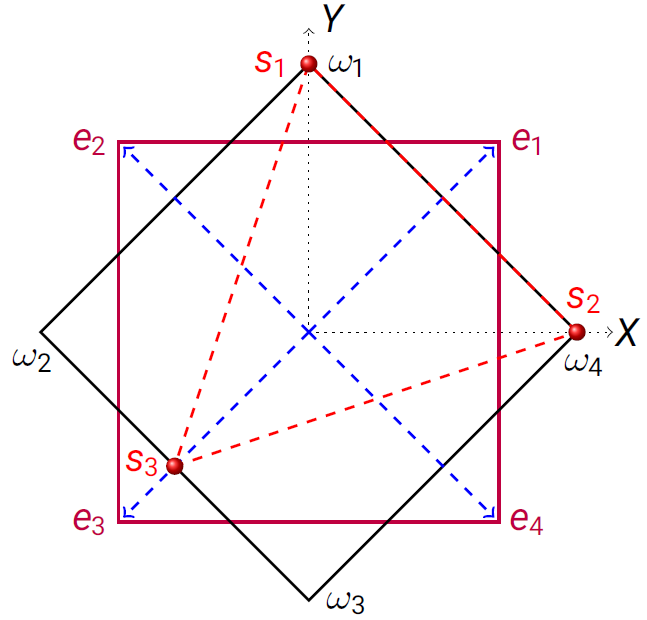}
\caption{(Color online) Strategy for the  $\mathbb{H}^3(1/3)$ game with communication of single $\mathcal{P}_{ly}(4)$ system from Alice to Bob (without shared randomness). Red dots denote Alice's encodings. Blue dashed lines are the effects proportional to Bob's decoding measurement.}\label{fig8}
\end{figure}

Note that, so far all the advantageous strategies discussed with qubit and polygonal models does not invoke any local randomness at the decoding step by Bob. However, using local randomness at Bob's end it can be further argued that the game $\mathbb{H}^3(1/3)$ can also be perfectly won with $\mathcal{P}_{ly}(4)$ system. For that, Alice uses the encoding $\mathbb{E}\equiv\left\{1\mapsto\omega_1,~2\mapsto\omega_4,~3\mapsto s:=\frac{1}{2}(\omega_2 +\omega_3 )\right\}$. For decoding, Bob performs the measurement $\mathcal{M}\equiv\left\{\frac{1}{2}e_1,\frac{1}{2}e_2,\frac{1}{2}e_3,\frac{1}{2}e_4\right\}$ which is a convex mixture (use of local randomness at decoding step) of the measurements $\mathcal{M}_1\equiv\{e_1,e_3\}$ and $\mathcal{M}_2\equiv\{e_2,e_4\}$ (see Fig.\ref{fig8}). Bob visits the Restaurant $3,~2$ and $1$, respectively when the outcome corresponding to the effect $e_1/2,~e_2/2$ and $e_4/2$ clicks. For the effect $e_3/2$ he visits Restaurant $1$ and $2$ with equal probability. It is a straightforward calculation to check that both the conditions ($\mathrm{h}1$) and ($\mathrm{h}2)$ are satisfied with this strategy. In fact, it turns out that any $\mathbb{H}^3(\gamma_1.\gamma_2,\gamma_3)$ game can be perfectly won with the $\mathcal{P}_{ly}(4)$ system (see Appendix \ref{gen-3-Restaurant with square}).

At this point, the work by Massar \& Patra is worth mentioning \cite{Massar14}. They have shown that the classical capacity (in Holevo sense {\it i.e.}, in the asymptotic limit) of $n$-gon system is $1$ bit for even $n$,  whereas it is larger than $1$ bit for odd $n$. In other words, all odd-gons have more communication utility than a single classical bit. On the other hand, the authors in \cite{Czekaj17} have reported the super-quantum communication utility of polygon models within the W-ANTV communication scenario. More recently, the super-quantum communication utility of $n=4$ system (square bit) has been established within H-FW scenario by considering suitable bipartite composition of this system \cite{DallArno17}. Our Restaurant games, however, establish super-classical communication utility of all even gon models in H-FW scenario from single copy consideration only. It might be interesting to compare the communication utility of a qubit with polygon models while playing the Restaurant games. We defer this issue to the next section.

\section{The more shared randomness there is, the greater the utility.}
So far we have seen that the games $\mathbb{H}^n(\gamma_1,\cdots,\gamma_n)$ proposed above have several novel implications. On  one hand, they establish the important role of classical shared randomness in enhancing the communication utility of a perfect classical channel, on the other hand, they also establish the superiority of qubit (as well as polygon systems) communication over a $1$-bit classical channel in H-FW communication scenario. In Section \ref{1bit1SR} it has also been shown that an arbitrary  $\mathbb{H}^3(\gamma_1,\gamma_2,\gamma_3)$ game can always be perfectly won when $1$-bit of classical communication is assisted with $1$-bit of classical shared randomness. Here we propose a class of games to show that classical communication lines may need more assistance of shared randomness for perfect winning.    

{\bf Strict Restaurant game:} The game denoted as $\mathbb{H}^n[1/(n-1)]$\footnote{Please note that here we use `square bracket' to denote the stricter version game $\mathbb{H}^n[\star]$ to make the distinction from its non-stricter version.} can be seen as a  stricter version of $\mathbb{H}^n(1/n)$. Here the conditions ($\mathrm{h}1$) and ($\mathrm{h}2$) are modified as follows
\begin{itemize}
\item[($\mathrm{h}_s1$)] Bob never visits a closed Restaurant, {\it i.e.} $p(i_b|i_c)=0,~\forall~i\in\{1,\cdots,n\}$.
\item[($\mathrm{h}_s2$)] Whenever a certain Restaurant is closed all other Restaurants must be visited with equal probability, {\it i.e.} $p(i_b|j_c\neq i)=\frac{1}{n-1},~\forall~i,j\in\{1,\cdots,n\}$.   
\end{itemize}
The visit matrix in Eq.(\ref{visit}) with this stricter winning conditions can be written compactly as 
\begin{align}\label{strict win}
p(i_b|j_c)=\frac{1}{n-1}(1-\delta_{ij}), \end{align}
where $\delta_{ij}$ denotes the Kronecker delta. Note that, the condition ($\mathrm{h}_s1$) is exactly same as the condition ($\mathrm{h}1$). On the other hand, condition $(\mathrm{h}_s2)$ always implies condition ($\mathrm{h}2)$, but the converse is not the case in general, which makes this game strict.

\subsection{Unwinnability of $\mathbb{H}^4[1/3]$ with $1$-bit communication $+1$-bit shared randomness}\label{msrmu}
For the game $\mathbb{H}^4[1/3]$ the winning conditions read as,
\begin{align}
\label{strict win_3}
p(i_b|j_c)=\frac{1}{3}(1-\delta_{ij});~~~i,j\in\{1,\cdots,4\}.
\end{align}
As a physical source of classical shared randomness we can consider that Alice and Bob share a two-qubit classically correlated state $\rho_{AB}=\lambda\ket{00}_{AB}\bra{00}+(1-\lambda)\ket{11}_{AB}\bra{11}$, with $\lambda\in[0,1]$. Outcomes $s\in\{0,1\}$ of their $\sigma_z$ measurement are correlated in this state. They have $1$-bit of shared randomness whenever $\lambda=1/2$, whereas $\lambda\in\{0,1\}$ implies no shared randomness.     

In assistant with $1$-bit of shared randomness, the most general strategy that Alice and Bob can follow is a convex mixture of two mixed strategies that can be implemented as the following steps.
\begin{itemize}
\item[(S-1)] Depending on the outcome $s\in\{0,1\}$ of the measurement $\sigma_z$ on her part of the state $\rho_{AB}$ and based on the information about the closed Restaurant $k\in\{1,\cdots,4\}$, Alice tosses a $2$ sided biased coin having the outcomes $\{0,1\}$. The outcome probabilities of the coins are given by $P_k^{(s)}(0)=\alpha_k^{(s)}$ and $P_k^{(s)}(1)=1-\alpha_k^{(s)}$.
\item[(S-2)] Alice communicates the outcome of her coin toss to Bob through the $1$-bit classical channel. 
\item[(S-3)] Depending on the outcome $s\in\{0,1\}$ of $\sigma_z$ measurement on his part of the state $\rho_{AB}$, Bob prepares two $4$ sided coins with outcomes $\{1,\cdots,4\}$ having the outcome probabilities $\vec{r}^{(s)}=(r_1^{(s)},\cdots,r_4^{(s)})$ and $\vec{q}^{(s)}=(q_1^{(s)},\cdots,q_4^{(s)})$, respectively. Upon receiving $0$ from Alice, he tosses the $\vec{r}^{(s)}$ coin and visits the $i^{th}$ Restaurant if $i^{th}$ outcome occurs. He follows a similar strategy with the $\vec{q}^{(s)}$ coin if $1$ is received from Alice.
\end{itemize}
With this strategy, the conditional probability $p(m_b|k_c)$ that Bob visits $m^{th}$ Restaurant provided the $k^{th}$ Restaurant is closed, turns out to be,
\begin{align}
\label{eq18}
p(m_b|k_c)=&\lambda\left[\alpha_k^{(0)}\times r_m^{(0)}+\left(1-\alpha_k^{(0)}\right)\times q_m^{(0)}\right]+\nonumber\\
(1-&\lambda)\left[\alpha_k^{(1)}\times r_m^{(1)}+\left(1-\alpha_k^{(1)}\right)\times q_m^{(1)}\right].
\end{align}
The first condition ($\mathrm{h}_s1$) demands $p(i_b|i_c)=0,~\forall~i\in\{1,\cdots,4\}$, which consequently implies,
\begin{align}
\label{eq19}
&\forall~i,~~\lambda\left[\alpha_i^{(0)}\times r_i^{(0)}+\left(1-\alpha_i^{(0)}\right)\times q_i^{(0)}\right]+\nonumber\\
&(1-\lambda)\left[\alpha_i^{(1)}\times r_i^{(1)}+\left(1-\alpha_i^{(1)}\right)\times q_i^{(1)}\right]=0.
\end{align}
It is easy to argue that in absence of shared randomness, {\it i.e.} when $\lambda\in\{0,1\}$, the goal is impossible to achieve. For $\lambda=0$ we have $\alpha_i^{(1)}\times r_i^{(1)}+\left(1-\alpha_i^{(1)}\right)\times q_i^{(1)}=0~\forall~i~\in~ \{1,\cdots 4\}$. This boils down to coming up with a partition $X$ and $Y$ as discussed in Section \ref{Games winnable with mixed strategies}. However, this scenario can be disregarded immediately. As there are $4$ Restaurants, no matter which partition $X$ and $Y$ of the Restaurants we take, there must exist at-least $2$ Restaurants in either $X$ or $Y$. Now if two Restaurants $m$ and $k$ belong to the same partition we must have $p(m_b|k_c)=0$; and hence violates the winning condition ($\mathrm{h}_s2$). A similar argument holds for the case $\lambda=1$. We now go on to show that even with the assistance of shared randomness, {\it i.e.} for $\lambda\notin\{0,1\}$, the goal is impossible to achieve. Since $\lambda\in(0,1)$, to satisfy Eq.(\ref{eq19}) we must have
\begin{align*}
\alpha_i^{(0)}\times r_i^{(0)}&=0,~~~\left(1-\alpha_i^{(0)}\right)\times q_i^{(0)}=0,\\
\alpha_i^{(1)}\times r_i^{(1)}&=0,~~~\left(1-\alpha_i^{(1)}\right)\times q_i^{(1)}=0,\\
&~~~~~\forall~i~\in\{1,2,\cdots 4\}.~\nonumber
\end{align*}
All the four equations contain product of two terms on the left hand side, and for each pair at least one must be zero. This in turn leads to several possible ways to satisfy the equations. The case where left term of a particular equation is $0$ will be denoted by `$0$', whereas `$1$' will indicate the right term is $0$. So each Restaurant must be assigned a four bit string indicating which of the terms in these four equations are $0$. For instance, lets say we assign a string `$0110$' to the $i^{th}$ Restaurant. The first bit of this string implies $\alpha_i^{(0)}=0$, second implies $q_i^{(0)}=0$, and similarly third and fourth imply  $r_i^{(1)}=\left(1-\alpha_i^{(1)}\right)=0$. It becomes immediate that the possibilities corresponding to the set of strings $\{0000,0001,0010,0011,0100,1000,1100\}$ are not allowed for any $i\in\{1,\cdots,4\}$. For all these cases we have $\alpha_i^{(s)}=1-\alpha_i^{(s)}=0$ for at least one $s\in\{0,1\}$, which is impossible. Furthermore, the possibility corresponding to the string `$1111$' is also not allowed for any $i$. In this case, although $p(i_b|i_c)=0$ and hence the condition ($\mathrm{h}_s1$) is satisfied, it turns out that $p(i_b|j_c\neq i)=0$ violating the  condition ($\mathrm{h}_s2$). Thus for any of the Restaurants, the allowed possibilities are   
\begin{align*}
\left\{\!\begin{aligned}
S_1&=0111,~~S_2=1011,~~S_3=1101,~~S_4=1110,\\
S_5&=0101,~~S_6=0110,~~S_7=1001,~~S_8=1010
\end{aligned}\right\}.
\end{align*}
\begin{center}
\begin{table}[t!]
\setlength
\extrarowheight{5pt}
\begin{tabular}{ c||c }
\hline
String~($S_i$) & ~~~~~~~~~~Strings compatible with $S_i$~~~~~~~~~~\\
\hline\hline
$S_1$ & $S_2,~S_3,~S_4,~S_7,~S_8$ \\
\hline
$S_2$ & $S_1,~S_3,~S_4,~S_5,~S_6$\\
\hline
$S_3$ & $S_1,~S_2,~S_4,~S_6,~S_8$\\
\hline
$S_4$ & $S_1,~S_2,~S_3,~S_5,~S_7$\\
\hline
$S_5$ & ~~~$S_1,~S_2,~S_3,~S_4,~S_6,~S_7,~S_8$~~~\\
\hline
$S_6$ & ~~~$S_1,~S_2,~S_3,~S_4,~S_5,~S_7,~S_8$~~~\\
\hline
$S_7$ & ~~~$S_1,~S_2,~S_3,~S_4,~S_5,~S_6~,~S_8$~~~\\
\hline
$S_8$ & ~~~$S_1,~S_2,~S_3,~S_4,~S_5,~S_6,~S_7$~~~\\
\hline
\end{tabular}
\caption{On the right column we list all the strings that are compatible with a given string $S_k$; $k\in\{1,\cdots,8\}$. Although the string $S_1$ is compatible with the string $S_5$ (see $5^{th}$ row), $S_5$ is not compatible with $S_1$ (see $1^{st}$ row).}
\label{table4}
\end{table}
\end{center}
\vspace{-.4cm}
We will use the notation $i^{(S_d)}$ to denote that $i^{th}$ Restaurant is assigned the string $S_d$. Let the string $S_1=0111$ is assigned to the $i^{th}$ Restaurant, then using Eq.(\ref{eq18}) we obtain $p(i_b^{(S_1)}|j_c)=\lambda[\alpha_j^{(0)}\times r_i^{(0)}]$. If the same string ($S_1$) is assigned to the $j^{th}$ Restaurant for $j\neq i$, then we must have $\alpha_j^{(0)}=0$. Consequently we have $p(i_b^{(S_1)}|j_c^{(S_1)})=0$, which violates the condition ($\mathrm{h}_s2$). Thus two different Restaurants cannot be assigned the same string, confining us to select $4$ different strings for $4$ different Restaurants. It is important to note that for $i\neq j$,  $p(i_b^{(S_5)}|j_c^{(S_1)})$ need not be $0$, although $p(i_b^{(S_1)}|j_c^{(S_5)})=0$. We say that $S_1$ is compatible with $S_5$, but $S_5$ is incompatible with $S_1$ as it violates ($\mathrm{h}_s2$). The list of compatible assignments of the strings are listed in Table \ref{table4}. To satisfy the condition $p(m_b|k_c)=1/3~\forall~(m\neq k)$ we must select $4$ strings that are compatible with each other. From Table \ref{table4} we only have two possible ways of selecting the strings: (i) $\equiv\{S_1,S_2,S_3,S_4\}$ or (ii) $\equiv\{S_5,S_6,S_7,S_8\}$. We are left to show that none of these two cases yields any consistent solutions for all the variables $\left\{\lambda,\alpha_k^{(s)},\vec{r}^{(s)},\vec{q}^{(s)}\right\}$. 

(i) $\{S_1,S_2,S_3,S_4\}$: Without loss of any generality we can choose the string $S_i$ for $i^{th}$ Restaurant. Then using Eq.(\ref{strict win_3}) \&  Eq.(\ref{eq18}) we obtain
\begin{align*}
p(1|2)&=\lambda r_1^{(0)}=\frac{1}{3},~~~p(1|3)=\lambda \alpha_3^{(0)}r_1^{(0)}=\frac{1}{3},\\
p(2|1)&=\lambda q_2^{(0)}=\frac{1}{3},~~~p(2|3)=\lambda(1-\alpha_3^{(0)})q_2^{(0)}=\frac{1}{3}.
\end{align*}
A consistent solution does not exist for these equations.

(ii) $\{S_5,S_6,S_7,S_8\}$: Once again without loss of an generality we can choose the string $S_{i+4}$ for $i^{th}$ Restaurant. Once agin Eq.(\ref{strict win_3}) \& Eq.(\ref{eq18}) yield
\begin{align*}
p(1|2)&=(1-\lambda) r_1^{(1)}=\frac{1}{3},~~~p(1|3)=\lambda r_1^{(0)}=\frac{1}{3},\\
p(1|4)&=\lambda r_1^{(0)}+(1-\lambda) r_1^{(1)}=\frac{1}{3}.
\end{align*}
No consistent solution exists in this case too. Thus the game $\mathbb{H}^4[1/3]$ is not possible to win with $1$-bit of classical communication even when the communication channel is assisted with $1$-bit of classical shared randomness. Naturally, the question arises whether  this game is winnable with $1$-bit of classical communication if more shared randomness is supplied. 
\begin{theorem}\label{theo3}
In the H-FW scenario the communication utility of 1 cbit channel is increased as the amount of  assisting shared randomness is increased from 1 bit to $\log 3$ bit.
\end{theorem}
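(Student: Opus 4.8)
The plan is to prove the theorem by pairing the impossibility result already established in Section~\ref{msrmu} with a matching achievability statement: the game $\mathbb{H}^4[1/3]$, which cannot be won using $1$-bit of classical communication assisted by only $1$-bit of shared randomness, \emph{can} be won perfectly once the assisting shared randomness is raised to $\log 3$ bits. The two statements together show that the communication utility of the $1$-cbit channel strictly increases over this range, which is exactly the assertion of the theorem. The only new work is therefore the construction of the $\log 3$ protocol; for the lower bound I would simply cite the string-compatibility analysis of Section~\ref{msrmu} (Table~\ref{table4} and the two exhausted cases $\{S_1,\dots,S_4\}$ and $\{S_5,\dots,S_8\}$), which already rules out every $1$-bit-shared-randomness strategy.

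First I would have Alice and Bob share a uniform variable $\lambda\in\{1,2,3\}$, carrying $\log 3$ bits, whose value selects one of the three ways of splitting the four Restaurants into two disjoint pairs, namely $\{12|34\}$, $\{13|24\}$, and $\{14|23\}$. For a fixed value of $\lambda$ the protocol is a plain classical mixed strategy in the sense of Section~\ref{Games winnable with mixed strategies}: Alice deterministically sends the one bit identifying which block of the chosen partition contains the closed Restaurant, and Bob visits, with equal local probability $1/2$, one of the two Restaurants in the \emph{opposite} block. Note that each pairing strategy requires no shared randomness on its own; $\lambda$ serves purely to synchronise Alice's encoding with Bob's decoding.

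Next I would record the conditional visit matrix $M_\lambda$ generated by each pairing. Since Bob always visits the block complementary to the one holding the closed Restaurant, the diagonal of every $M_\lambda$ vanishes, so condition~($\mathrm{h}_s1$) is met run-by-run; within each row, $M_\lambda$ places probability $1/2$ on exactly the two off-diagonal entries corresponding to the opposite block and $0$ on the remaining off-diagonal entry. The decisive combinatorial fact is that the three pair-partitions are complementary: for any ordered pair $(i,j)$ with $i\neq j$, the indices $i$ and $j$ fall in opposite blocks in exactly two of the three partitions. Hence the uniform mixture obeys
\begin{align}
\frac{1}{3}\sum_{\lambda=1}^{3}(M_\lambda)_{ij}=\frac{1}{3}(1-\delta_{ij}),
\end{align}
since for $i\neq j$ exactly two of the three matrices contribute $1/2$ to the $(i,j)$ entry, while all diagonal entries vanish. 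This reproduces exactly the strict target matrix of Eq.(\ref{strict win_3}), so both ($\mathrm{h}_s1$) and ($\mathrm{h}_s2$) hold and the protocol wins $\mathbb{H}^4[1/3]$ perfectly.

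The substance of the theorem, and the step I would flag as genuinely hard, is not this achievability construction — which reduces to the verification above — but the lower bound that $1$-bit of shared randomness is insufficient, already carried out in Section~\ref{msrmu}. A subtlety worth stating explicitly is that $\log 3$ is here shown only to be \emph{sufficient}; I do not claim it is the exact threshold, so the theorem should be read as asserting a strict gain in utility somewhere between $1$ and $\log 3$ bits rather than as identifying the minimal shared randomness required.
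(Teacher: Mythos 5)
Your proposal is correct and follows essentially the same route as the paper: the paper's proof of Theorem~\ref{theo3} uses exactly the three pair-partitions $\{12|34\}$, $\{13|24\}$, $\{14|23\}$ (its cases C-$1$, C-$2$, C-$3$), mixes them uniformly with $\log 3$ bits of shared randomness, and combines this with the Section~\ref{msrmu} impossibility result for $1$-bit shared randomness. Your observation that each ordered pair $(i,j)$, $i\neq j$, lies in opposite blocks in exactly two of the three partitions is just a cleaner way of verifying the same visit matrices the paper writes out explicitly.
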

\begin{proof}
Although the game $\mathbb{H}^4[1/3]$ cannot be won with $1$-bit+$1$-SR resource, here we will show that the goal can be perfectly achieved through a $1$-bit classical channel if more shared randomness are provided as assistance. The strategy is described below. 

Alice divides the Restaurants into two disjoint partitions $X$ and $Y$, such that $ |X|=|Y|=2$. Alice sends $0$ to Bob whenever a Restaurant in $X$ is closed, otherwise, she sends $1$. Bob visits the Restaurants within the set $Y$ ($X$) with uniform probability if he receives $0~(1)$ from Alice. Clearly, the condition ($\mathrm{h}_s1$) is satisfied as Bob never visits a closed Restaurant. Consider the following three partitionings $X$ and $Y$.
\begin{itemize}
\item[C-$1$:] $X=\{1,2\}~\&~Y=\{3,4\}$,
\item[C-$2$:] $X=\{1,3\}~\&~Y=\{2,4\}$,
\item[C-$3$:] $X=\{1,4\}~\&~Y=\{2,3\}$.
\end{itemize}
These correspond to three different strategies leading to three different visit matrices 
\footnotesize
\begin{align*}
\mathbb{V}_1= \left(\begin{array}{cccc}
     0  &0&\frac{1}{2}&\frac{1}{2} \\
    0&0&    \frac{1}{2}&\frac{1}{2}\\
      \frac{1}{2}&\frac{1}{2}&0&0\\
         \frac{1}{2}&\frac{1}{2} & 0&0
  \end{array}\right),
\mathbb{V}_2= \left(\begin{array}{cccc}
      0&\frac{1}{2}&0&\frac{1}{2} \\
       \frac{1}{2}&0&\frac{1}{2}&0\\
     0&\frac{1}{2}&0&\frac{1}{2}\\
         \frac{1}{2}&0&\frac{1}{2} & 0
  \end{array}\right),
\mathbb{V}_3=\left(\begin{array}{cccc}
     0  & \frac{1}{2}&\frac{1}{2}&0 \\
        \frac{1}{2}&0&0&\frac{1}{2}\\
     \frac{1}{2}&0&0&\frac{1}{2}\\
         0&\frac{1}{2}&\frac{1}{2} & 0
  \end{array}\right).
\end{align*}
\normalsize
An equal mixture of these three strategies, which requires Alice and Bob to share $\log3$-bit of shared randomness, yields the resulting visit matrix 
\begin{align*}
\mathbb{V}=\left(\begin{array}{cccc}
0&1/3&1/3&1/3\\
1/3&0&1/3&1/3\\
1/3&1/3&0&1/3\\
1/3&1/3&1/3&0
\end{array}\right),
\end{align*}
which satisfies both the conditions ($\mathrm{h}_s1$) and ($\mathrm{h}_s2$).

$\mathbf{\mathbb{H}^n[1/(n-1)]}:$ The above protocol can be generalized for  $\mathbb{H}^{n}[1/(n-1)]$ game with $1$-bit of classical communication assisted with $\log (n-1)$-bit of SR. Alice sends $0$ and $1$ to direct Bob to visit $k$-th and $(k+1)$-th Restaurant respectively, where $k\in\{1,2,\cdots,(n-1)\}$. The value of the $k$ will be identified by the outcomes $\{1,2,\cdots,(n-1)\}$ of the SR, so it requires $\log (n-1)$-bit of SR. Whenever the $m$-th Restaurant ($m\in\{2,3,\cdots,(n-1)\}$) is closed, Alice communicates $0$ and $1$ respectively for every $k\in\{1,\cdots,(m-1)\}$-th and $k\in\{m,\cdots,(n-1)\}$-th outcomes of the SR. On the other hand, if the $1$-st or, $n$-th Restaurant is closed, then Alice will communicate $0$ and $1$ respectively, independent of the SR outcomes. It is easy to verify that the strategy satisfies both the required conditions. 
 \end{proof}
\begin{remark}
At this point it is important to note that a $\mathbb{H}^{n}[1/(n-1)]$ game cannot be won with a qubit communication alone, whenever $n\ge5$. This establishes the `order of merit' Q$\prec_{inst}$C+SR as listed in Table \ref{table5}.  
\end{remark}

\subsection{Perfect winnability of $\mathbb{H}^4[1/3]$ with qubit strategy}\label{strictq}
Although the game $\mathbb{H}^4[1/3]$ cannot be won with $1$-bit communication $+1$-bit shared randomness, the goal can be achieved through a qubit communication without any assistance of classical shared randomness. Alice sends the qubit state $\ket{\psi_i}$ when the $i^{th}$ Restaurant is closed. The Bloch vectors corresponding the encoded states are respectively 
\begin{figure}[t!]
\centering
\includegraphics[width=0.4\textwidth]{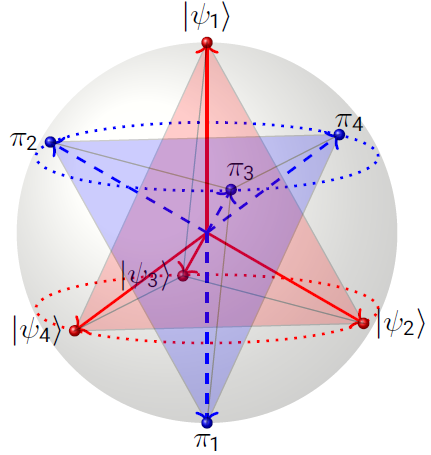}
\caption{(Color online) Alice's encodings (red dots) form a symmetric tetrahedron inscribed within the Bloch sphere. For decoding Bob performs SIC-POVM corresponding to the inverted tetrahedron (blue dots) of Alice's encoding tetrahedron.}\label{fig9}
\end{figure}
\begin{align*}
1&\mapsto \ket{\psi_1}\equiv(0,0,1)^{\mathrm{T}},\\
2&\mapsto \ket{\psi_2}\equiv\left(\frac{2\sqrt{2}}{3},0,\frac{-1}{3}\right)^{\mathrm{T}},\\
3&\mapsto \ket{\psi_3}\equiv\left(\frac{2\sqrt{2}}{3}\cos\frac{2\pi}{3},\frac{2\sqrt{2}}{3}\sin\frac{2\pi}{3},\frac{-1}{3}\right)^{\mathrm{T}},\\
4&\mapsto \ket{\psi_4}\equiv\left(\frac{2\sqrt{2}}{3}\cos\frac{4\pi}{3},\frac{2\sqrt{2}}{3}\sin\frac{4\pi}{3},\frac{-1}{3}\right)^{\mathrm{T}}.
\end{align*}
The encodings form a symmetric tetrahedron within the Bloch sphere (see Fig.\ref{fig9}). For decoding, Bob performs the measurement $\mathcal{M}(4)\equiv\left\{\pi_k:=\frac{1}{2}\ketbra{\psi_k^{\perp}}{\psi_k^{\perp}}\right\}_{k=1}^4$ and visits the $k^{th}$ Restaurant if outcome corresponding to the effect $\pi_k$ is observed. Note that, the decoding corresponds to the SIC-POVM described by the tetrahedron inverted to the encoding tetrahedron. The fact, $\tr\left(\ket{\psi_i}\bra{\psi_i}\pi_k\right)=\frac{1}{3}\left(1-\delta_{ik}\right)$ ensures perfect winning of the game $\mathbb{H}^4[1/3]$.  For the comparison of resources we will use $\mathcal{R}_1\prec_{inst}\mathcal{R}_2$ to imply that there is some instance where $\mathcal{R}_2$ is strictly better than $\mathcal{R}_1$, while in a different instance their utilities might be in reverse order. So by considering the $\mathbb{H}^4[1/3]$ game we conclude the following theorem.
\begin{theorem}
   There exist a task in the H-FW scenario where the  utility of 1 qubit is higher than the utility of 1 cbit with the assistance of 1 bit of shared randomness i.e. $C+ 1 SR \prec_{inst}Q$.
\end{theorem}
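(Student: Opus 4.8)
The plan is to let the witnessing task be the strict game $\mathbb{H}^4[1/3]$, whose target is the visit matrix $p(i_b|j_c)=\tfrac{1}{3}(1-\delta_{ij})$, and to read off the separation $C+1\,\mathrm{SR}\prec_{inst}Q$ directly from the two results already established in the preceding subsections. First I would fix the payoff precisely, so that ``higher utility'' is meaningful: take $\beta$ to be any reward that attains its maximal value exactly when both winning conditions $(\mathrm{h}_s1)$ and $(\mathrm{h}_s2)$ hold (for instance $\beta=1$ if the realized matrix equals the target and $\beta<1$ otherwise). With this normalization the claim reduces to comparing, on this single task, the best $\beta$ each resource can reach.

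For the quantum side I would invoke the explicit construction of Section \ref{strictq}: Alice's tetrahedral encoding $\{\ket{\psi_i}\}_{i=1}^4$ together with Bob's inverted-tetrahedron SIC-POVM $\mathcal{M}(4)=\{\tfrac{1}{2}\ketbra{\psi_k^\perp}{\psi_k^\perp}\}_{k=1}^4$ yields $\tr(\ketbra{\psi_i}{\psi_i}\pi_k)=\tfrac{1}{3}(1-\delta_{ik})$, hence attains $\beta=1$ using one qubit and \emph{no} shared randomness. For the classical-plus-shared-randomness side I would invoke the impossibility proof of Section \ref{msrmu}, which shows that no strategy built from $1$-bit of communication assisted by the binary correlated source reproduces the target matrix; this covers every $\lambda\in(0,1)$, and in particular the $\lambda=\tfrac{1}{2}$ case of exactly one bit of shared randomness. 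Consequently the maximal payoff achievable by $C+1\,\mathrm{SR}$ is strictly below $1$. Combining the two bounds gives a task where $Q$ is strictly better than $C+1\,\mathrm{SR}$, which is precisely the defining requirement of $\prec_{inst}$; the Remark preceding the theorem, supplying a task where $Q\prec_{inst}C+\mathrm{SR}$, then rounds out the incomparability picture but is not logically needed for the statement itself.

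The only delicate point, and the one I would be careful to flag rather than re-derive, is generality on the classical side: the separation hinges on the impossibility covering \emph{every} $C+1\,\mathrm{SR}$ strategy, not merely the particular parametrization written down. This is exactly what Section \ref{msrmu} secures, by arguing that the most general such strategy is a convex mixture, over the two values of the shared bit, of mixed encodings and decodings (Eq.~(\ref{eq18})), and then exhaustively ruling out every consistent assignment of the eight admissible four-bit strings via the compatibility analysis summarized in Table \ref{table4}. I would therefore simply cite that case analysis, so that the proof of the present theorem is a clean combination of the qubit realizability of Section \ref{strictq} and the $C+1\,\mathrm{SR}$ no-go of Section \ref{msrmu}.
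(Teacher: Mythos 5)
Your proposal is correct and follows essentially the same route as the paper: the theorem is established by combining the explicit tetrahedral-encoding/SIC-POVM qubit protocol of Section \ref{strictq}, which realizes the target visit matrix $p(i_b|j_c)=\frac{1}{3}(1-\delta_{ij})$ exactly, with the exhaustive $C+1\,\mathrm{SR}$ impossibility argument of Section \ref{msrmu}. Your additional care in pinning down the payoff normalization and in flagging that the classical no-go must cover all strategies (which the string-compatibility case analysis does) is a reasonable tightening but does not change the argument.
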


\subsection{Unwinnability of $\mathbb{H}^4[1/3]$ with Polygons}\label{qbtp}
To ensure that condition ($\mathrm{h}_s1$) with a polygon system, Alice's must choose her encodings $\{\omega_i\}_{i=1}^4$ from the boundary of the polygon. Otherwise, no effect in a polygon model will yield zero probability for a given encoding state. For an $e_i$ if we move towards the effect $\tilde{e_i}$, with $e_i$ and $\tilde{e_i}$ forming a measurement, the probability of the probability of observing the effect ${e_i}$ decreases gradually. For the game $\mathbb{H}^4[1/3]$, the  decoding strategy of Bob must be a $4$ outcome measurement $\mathcal{M}(4)\equiv\left\{\left(\alpha_i,e_i\right)~|~\sum_{i=1}^4\alpha_ie_i=u\right\}$ such that Bob visits $i^{th}$ Restaurant when the outcome corresponding to the effect $e_i$ is observed. For Alice's encoding $\{\omega_j\}_{j=1}^4$ the winning condition of $\mathbb{H}^4[1/3]$ demands $\alpha_ie_i^\mathrm{T}\omega_j=\left(1-\delta_{ij}\right)/3$
\begin{figure}[b!]
\centering
\includegraphics[width=0.4\textwidth]{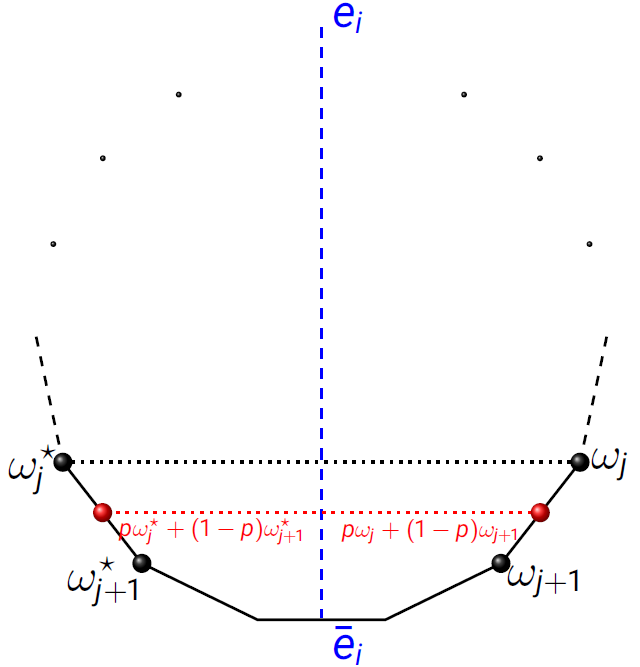}
\caption{For any polygon theory the probability of observing an effect $e_i$ on a state $\omega_j$ keep decreasing as the state moves from $e_i$ towards $\tilde{e}_i$. Since $e_i$ and $\bar{e}_i$ form a  measurement, {\it i.e.}, $e_i+\bar{e}_i=u$, the mirror image $\omega_j^*$ of any state $\omega_j$ about the line passing through $e_i$ and $\tilde{e}_i$ will give the same probability of occurrence the effect $e_i$.}\label{fig10}
\end{figure}

For any effect $e_i$ there exists an state $\omega_j$ such that $e_i^{\mathrm{T}}\omega_j\leq\frac{1}{3}<e_i^{\mathrm{T}}\omega_{j+1}$. Let the first encoding state is $p\omega_j+(1-p)\omega_{j+1}$, where $p\in[0,1)$. Symmetry of the state spaces imply that probability of observing $e_i$ on any state $\omega_j$ is same as that on the state $\omega_j^*$, where $\omega_j^*$ is the mirror image of $\omega_j$ about the line joining $e_i$ and $\tilde{e}_i$, where $e_i+\bar{e}_i=u$ (see Fig. \ref{fig10}). Therefore, the second encoding state must be $p\omega_j^*+(1-p)\omega_{j+1}^*$, which is the mirror image of the first encoding state about the line joining $e_i$ and $\bar{e}_i$. As the probability of observing the effect $e_i$ keeps decreasing gradually as the choices of the encoding state  move away from $e_i$ to $\bar{e_i}$, there will not be a third encoding state yielding the winning probability for the effect $e_i$. Thus, a set of $4$ distinct encoding states does not exist that satisfy the winning condition. So as a comparison of different communication resources we can establish the following theorem.
\begin{theorem}
    There exist a task in the H-FW scenario where in the absence of any preshared correlation the utility of 1 qubit is greater than 1 polygon bit i.e. Polygon$\prec_{inst}$ Q.
\end{theorem}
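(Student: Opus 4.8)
The plan is to exhibit a single communication task that one qubit wins perfectly while no polygon system can, thereby witnessing the strict separation Polygon$\prec_{inst}$ Q in the correlation-free regime. The natural candidate is the strict game $\mathbb{H}^4[1/3]$, and the proof cleanly splits into a positive half (a qubit succeeds) and a negative half (every polygon fails), both carried out without any preshared correlation.

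For the positive half I would simply invoke the qubit strategy of Section \ref{strictq}: Alice encodes the four closure events into the four pure states $\ket{\psi_i}$ forming a regular tetrahedron in the Bloch sphere, and Bob decodes with the SIC-POVM $\mathcal{M}(4)=\{\pi_k:=\tfrac{1}{2}\ketbra{\psi_k^{\perp}}{\psi_k^{\perp}}\}_{k=1}^4$ built from the antipodal tetrahedron. The identity $\tr(\ketbra{\psi_i}{\psi_i}\pi_k)=\tfrac{1}{3}(1-\delta_{ik})$ reproduces the strict visit matrix of Eq.(\ref{strict win_3}) exactly, so $\mathbb{H}^4[1/3]$ is perfectly winnable with one qubit and no shared randomness.

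The negative half is the substance of the argument, and here I would formalise the reasoning sketched in Section \ref{qbtp}. First, condition ($\mathrm{h}_s1$) applied to each diagonal pair forces $e_i^{\mathrm{T}}\omega_i=0$, and since in a polygon model only boundary states annihilate an effect, \emph{every} encoding state $\omega_i$ must lie on the one-dimensional boundary of the state space; correspondingly Bob's decoding is a four-outcome measurement $\{\alpha_i e_i\}_{i=1}^4$ with $\sum_i\alpha_i e_i=u$, and winning demands $\alpha_i\,e_i^{\mathrm{T}}\omega_j=\tfrac{1}{3}(1-\delta_{ij})$. The crux is a level-set count: restricted to the closed boundary curve, the nonnegative linear functional $\omega\mapsto e_i^{\mathrm{T}}\omega$ attains its minimum value $0$ on the face opposite $e_i$ (which contains $\omega_i$) and its maximum at the vertex aligned with $e_i$, increasing along one arc and decreasing along the other; by the reflection symmetry of the polygon about the axis through $e_i$ and $\tilde{e}_i$, each value strictly above the minimum is realised at exactly two mirror-image boundary points. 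Hence $\alpha_i e_i^{\mathrm{T}}\omega_j=\tfrac{1}{3}>0$ can hold for at most two distinct boundary states, while the strict game requires it for all three indices $j\neq i$ — a contradiction. Combining the two halves yields the separation immediately.

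The main obstacle I anticipate is making the ``at most two boundary points per level'' lemma uniform and fully rigorous across all $\mathcal{P}_{ly}(n)$: I must verify the monotone-plus-mirror picture for both even and odd gons (whose effect parametrisations differ), treat the degenerate case in which an encoding state lies in the relative interior of an edge so that a level set could in principle be an entire edge rather than two isolated points, and confirm that the minimal face annihilated by $e_i$ cannot accommodate two of the required encoding states. Once this single geometric lemma is pinned down, the impossibility for polygons and hence the theorem follow without further calculation.
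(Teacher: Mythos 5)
Your proposal is correct and follows essentially the same route as the paper: the witnessing task is $\mathbb{H}^4[1/3]$, the positive half is the tetrahedral encoding with the inverted-tetrahedron SIC-POVM of Section \ref{strictq}, and the negative half is exactly the argument of Section \ref{qbtp} — boundary encodings forced by ($\mathrm{h}_s1$), plus the mirror-symmetry/monotonicity observation that the level set $\{\omega : e_i^{\mathrm{T}}\omega = c\}$ with $c>0$ meets the boundary in at most two points, so a third state $\omega_j$ with $j\neq i$ cannot exist. Your "level-set count" is a slightly sharper restatement of the paper's "no third encoding state" step, and the loose ends you flag (even vs.\ odd gons, degenerate edge cases) are precisely the points the paper's own informal treatment also leaves implicit.
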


\section{Order of merit}\label{OoM}
Present work along with some recent works put interesting `order of merit' among different communication resources to compare their utility in the H-FW communication scenario. Following three different ordering relations we will use.
\begin{itemize}
\item A resource $\mathcal{R}_1$ will be same as a different resource $\mathcal{R}_2$, denoted as $\mathcal{R}_1=\mathcal{R}_2$ if for any communication task within H-FW scenario they provide same utility.
\item We will denote $\mathcal{R}_1\preccurlyeq\mathcal{R}_2$ if  $\mathcal{R}_2$ is as good as $\mathcal{R}_1$ for any such task and there exist at-least one task where $\mathcal{R}_2$ yield a strictly greater utility than $\mathcal{R}_1$.
\item Finally, $\mathcal{R}_1\prec_{inst}\mathcal{R}_2$ will imply that there is some instance where $\mathcal{R}_2$ is strictly better than $\mathcal{R}_1$, while in a different instance their utilities might be in reverse order.
\end{itemize}
Few important orderings are listed in Table \ref{table5}.
\begin{table}[t!]
\setlength
\extrarowheight{5pt}
\centering
\begin{tabular}{ |c||c|}
\hline
~~{\bf Order of merit}~~ & {\bf Task}\\
\hline\hline
Q $\preccurlyeq$ Q + $1$ ebit & \href{https://doi.org/10.1103/PhysRevLett.69.2881}{PRL {\bf 69}, 2881 (1992)}\\
\hline
C $\preccurlyeq$ C + $1$ ebit & \href{https://doi.org/10.22331/q-2022-03-01-662
}{Quantum {\bf 6}, 662 (2022)}\\
\hline
C+SR=Q+SR &\href{https://doi.org/10.1007/s00220-015-2463-0}{CMP {\bf 340}, 563 (2015)} \\
\hline
C $\preccurlyeq$ C+SR &  $\mathbb{H}^n(\gamma_1,\cdots,\gamma_n):$ Section \ref{1bit1SR}\\
\hline
C$\preccurlyeq$ Q & $\mathbb{H}^3(\gamma_1,\gamma_2,\gamma_3):$ Section \ref{quantum strategy}\\
\hline
C$\preccurlyeq$ $2n$-gon & $\mathbb{H}^n(1/n):$ Section \ref{polygon}\\
\hline
C+ $1$ SR$\prec_{inst}$Q & $\mathbb{H}^4[1/3]:$ Section \ref{strictq}\\
\hline
C+$1$ SR$\preccurlyeq$ C+$\log3$-SR& $\mathbb{H}^4[1/3]:$ Section \ref{msrmu}\\
\hline
Polygon$\prec_{inst}$ Q& $\mathbb{H}^4[1/3]:$ Section \ref{qbtp}\\
\hline
Q$\prec_{inst}$C+SR & $\mathbb{H}^n[1/(n-1)],~n\ge5:$ Appendix \ref{ccsfor strict}\\
\hline
Q+SR$\prec_{inst}$C+$1$ ebit& \href{https://doi.org/10.22331/q-2022-03-01-662}{Quantum {\bf 6}, 662 (2022)}\\
\hline
\end{tabular}
\caption{Utility of different resources in H-FW scenario. Here Q denotes a perfect qubit channel and C is a perfect $1$-bit classical channel. When we write just `SR' it means unbounded amount of this resource is allowed.}
\label{table5}
\end{table}

\section{Making the quantum advantage noise-robust}\label{noise}
Noise is an inevitable enemy of any communication line. In quantum scenario this is even more prominent as a little bit of thermal noise can destroy the encodings. Furthermore, due to measurement impression it might not be possible to perform the intended decoding measurement exactly. So it is quite important to investigate whether the quantum advantage reported above persists under a noisy circumstance. Here we mainly analyse the $\mathbb{H}^3(1/3)$ game. A similar analysis is possible for other cases too. \begin{figure}[b!]
\centering
\includegraphics[width=0.45\textwidth]{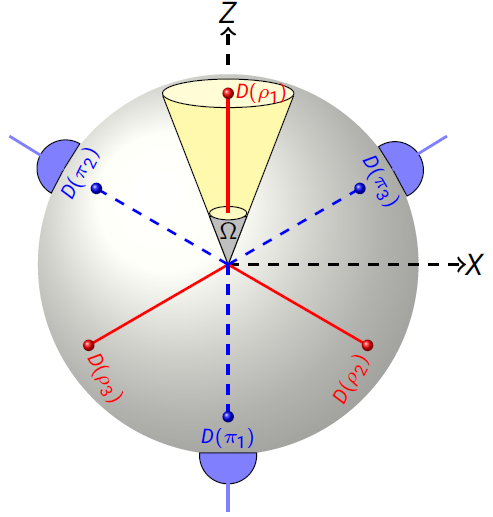}
\caption{(Color online) Noise robust quantum protocol for the game $\mathbb{H}^3(1/3)$. Under a depolarizing noise the encoded states get shrieked within the Bloch sphere. Both the encoded states and decoding effects are assumed to undergo such noise.}
\label{fig11}
\end{figure}
Note that any kind of noise that affects the entire Bloch sphere symmetrically, known as depolarizing ($D$) noise, does not change the visiting probability $p(m_b)$. Experimentally such a noise can arise when Alice is not able to prepare the ideal quantum preparations $\rho_k=\frac{1}{2}\left[\mathbb{I}+\hat{n}_k\cdot \vec{\sigma}\right]$, rather $\hat{n}_k$ takes values uniformly from a cone in the Bloch sphere that subtends solid angle $\Omega$ at the origin with axis $\hat{n}_k$ (see Fig.\ref{fig11}). For such a noise $D_\epsilon$, specified by depolarizing parameter $\epsilon\in[0,1]$, a state $\rho_k=\frac{1}{2}[\mathbb{I}+\hat{n}_k.\vec{\sigma}]$ gets modified as
\begin{align}
D_\epsilon(\rho_k)&:=(1-\epsilon)\rho_k+\epsilon\frac{\mathbb{I}}{2}=\frac{1}{2}[\mathbb{I}+(1-\epsilon)\hat{n}_k.\vec{\sigma}] \nonumber\\
&=\frac{1}{2}[\mathbb{I}+D_\epsilon(\hat{n}_k).\vec{\sigma}],~~D_\epsilon(\hat{n}_k):=(1-\epsilon)\hat{n}_k.
\end{align}
Similarly, the POVM elements $\pi_k=\frac{1}{3}\left[\mathbb{I}-\hat{n}_k\cdot \vec{\sigma}\right]$ get modified to,
\begin{align}
D_\epsilon(\pi_k)&:=(1-\epsilon)\pi_k+\epsilon\frac{\mathbb{I}}{3}=\frac{1}{3}[\mathbb{I}+(1-\epsilon)\hat{n}_k.\vec{\sigma}] \nonumber\\
&=\frac{1}{3}[\mathbb{I}-D_\epsilon(\hat{n}_k).\vec{\sigma}],~~D_\epsilon(\hat{n}_k):=(1-\epsilon)\hat{n}_k.
\end{align}
Consider that both Alice's encoding and Bob's decoding get affected with such noises $D_{\epsilon_e}$ and $D_{\epsilon_d}$, respectively. For the game $\mathbb{H}^3(1/3)$, due to the symmetric nature of the noise, Bob's probability of visiting a Restaurant will remain the same as before, {\it i.e.} $p(m_b)=1/3$. However, there is a finite probability of visiting a closed Restaurant by Bob, with the probability depending on the noise parameters $\epsilon_s$ and $\epsilon_e$. The probability of visiting the $k^{th}$ when it is closed is,
\begin{align}
p(k_b|k_c)&=\Tr\left[D_{\epsilon_{s]e}}(\rho_k) D_{\epsilon_{d}}(\pi_k)\right]\nonumber\\
&=\frac{1}{3}(\epsilon_{e}+\epsilon_{d}-\epsilon_{e}\epsilon_{d}).
\end{align}
If a classical strategy without shared randomness is used, and we put the conditions that ($\mathrm{h}_n1$) a closed Restaurants can be visited with probability less than $1/6$ and ($\mathrm{h}_n2$) Each Restaurant must be visited with equal probability, then no solution exists; here the sub-index $n$ is used to denote that the perfect conditions is modified for noisy case. In other words, there is no classical mixed strategy without shared randomness that satisfies the conditions:
\begin{align}
p(m_b)=\frac{1}{3}~~\&~~p(m_b|m_c)< \frac{1}{6},~\forall m \in \{1,2,3\}.
\end{align}
Therefore, a noisy quantum strategy gives advantage over any classical mixed strategy whenever noise parameters $\epsilon_s$ and $\epsilon_e$ are small enough to obey the condition $(\epsilon_{e}+\epsilon_{d}-\epsilon_{e}\epsilon_{d}) <1/2$ as illustrated in Fig \ref{fig12}. 
\begin{figure}[t!]
\centering
\includegraphics[width=0.45\textwidth]{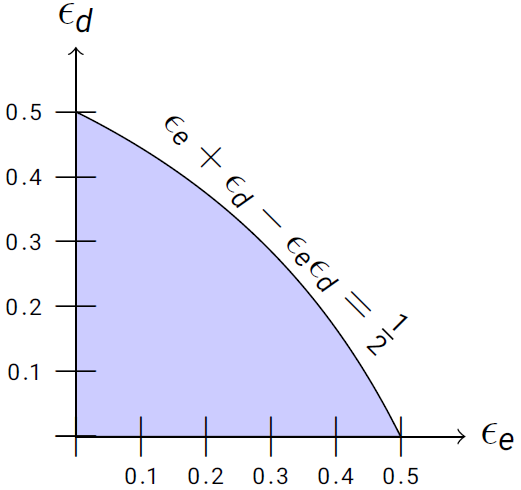}
\caption{Trade-off in the noise parameters $\epsilon_e$ and $\epsilon_d$ allowing quantum advantage in $\mathbb{H}^3(1/3)$ when both the encodings and decoding experience depolarizing noise.}
\label{fig12}
\end{figure}
At this point one may wish to analyze a different type of noise in quantum systems and carry out the analysis along the lines that we have done above. For such purposes it is convenient to define a function that captures the error obtained while trying to play our Restaurant game. One such function is given by
\begin{align}
\mathscr{E}(s)=\frac{1}{3}\sum_i\left[p^s(i_b|i_c)+(\gamma_i^s-1/3)^2\right];\label{error}
\end{align}
where $\gamma_i^s:=\frac{1}{3}\sum_{j=1}^3p^s(j_b|i_c)$ and `$s$' denotes the strategy followed by Alice and Bob. The collective aim of Alice and Bob is to come up with a strategy that optimize this error. For a perfect strategy ($s_p$) that satisfies both ($\mathrm{h}1$) and ($\mathrm{h}2$) we have $\mathscr{E}(s_p)=0$. A general mixed strategy is characterized by $7$ parameters -- $3$ for three different $2$-faced coins on Alice's part and $4$ for two $3$-faced coins on Bob's part. Using Monte Carlo simulation over this $7$-parameter space we find that for classical mixed strategies, $\mathscr{E}(s)$ is lower bounded by $0.108$. Thus any noisy quantum strategy, having the error value less than this value will establish the quantum advantage. 

\section{Advantage of Shared Randomness for worst-case success}\label{worst-case success}
The advantage of shared randomness in H-FW scenario can also be shown in a much simpler game. A referee provide Alice some classical random variable $x\in\{1,2,3\}$ and Bob generate some random variable $b\in\{1,2,3\}$. Bob has to guess Alice's input but  communication from Alice to Bob is restricted to $1$ cbit. Now the referee will consider the worst case success probability of Bob correctly guessing Alice's input, which is given by $P_{suc}^{W}=\min_{x}P(b=x|x)$.\\
{\bf Classical Strategy}: Let us consider particular encoding and decoding strategy for Alice and Bob. If Alice receive `$1$' as her input she communicate `$0$' to Bob, otherwise `$1$' is communicated. Whenever Bob receives `$0$' he answers `$1$' and upon receiving `$1$' half of the time he  answers `$2$' and other half of the time he will answers `$3$'. The success probability for different input $x$ using this strategy are $P(1|1)=1$, $P(2|2)=\frac{1}{2}$, $P(3|3)=\frac{1}{2}$. So the worst case success probability is $P_{suc}^W=\frac{1}{2}$.\\
{\bf Most General Mixed Strategy}: We want to consider the most general mixed strategy such that $P_{suc}^W>\frac{1}{2}$. A generic such strategy can be realized in following steps.
\begin{itemize}
\item[(S-1)] If the x index is received, Alice tosses a $2$-sided biased coin having the outcomes $\{0,1\}$. The outcome probabilities of the coin are given by $P_x(0)=\alpha_x$ and $P_x(1)=1-\alpha_x$.
\item[(S-2)] Alice communicates the outcome of the coin toss to Bob through the perfect $1$-bit classical channel. 
\item[(S-3)] Bob prepares two $3$-sided coins with outcome probabilities specified by the probability vectors $\vec{r}=(r_1,r_2,r_3)$ and $\vec{q}=(q_1,q_2,q_3)$, respectively. If he receives $0$ from Alice he tosses the $\vec{r}$ coin and answers b if $b^{th}$ outcome occurs. He follows a similar strategy with the $\vec{q}$ coin if $1$ is received from Alice.
\end{itemize}
So the success probability for different inputs are given by $\gamma_1=P(1|1)=\alpha_1r_1+(1-\alpha_1)q_1$, $\gamma_2=P(2|2)=\alpha_2r_2+(1-\alpha_2)q_2$, $\gamma_3=P(3|3)=\alpha_3r_3+(1-\alpha_3)q_3$.For the worst case success probability to be greater than $\frac{1}{2}$ we require
\begin{subequations}
    \begin{align}
        \alpha_1r_1+(1-\alpha_1)q_1>\frac{1}{2}&\implies r_1>\frac{1}{2} \lor q_1>\frac{1}{2}\\
        \alpha_2r_2+(1-\alpha_2)q_2>\frac{1}{2}&\implies r_2>\frac{1}{2} \lor q_2>\frac{1}{2}\\
        \alpha_3r_3+(1-\alpha_3)q_3>\frac{1}{2}&\implies r_3>\frac{1}{2} \lor q_3>\frac{1}{2}
    \end{align}
\end{subequations}
All the equations cannot be satisfied simultaneously so the worst case success probability without Shared correlation is $P_{suc}^W=\frac{1}{2}$.\\
{\bf Classically Correlated Strategy}: If shared randomness is allowed between Alice and Bob the worst case success probability can be increased. Alice divides the Restaurants into two disjoint partitions $X$ and $Y$, such that $ |X|=1$ and $|Y|=2$. Alice sends $0$ to Bob whenever the index in $X$ is received, otherwise she sends $1$. Bob visits the Restaurants within the set $Y$ ($X$) with uniform probability if he receives $0~(1)$ from Alice.  Consider the following three partitionings $X$ and $Y$.
\begin{itemize}
\item[C-$1$:] $X=\{1\}~\&~Y=\{2,3\}$,
\item[C-$2$:] $X=\{2\}~\&~Y=\{1,3\}$,
\item[C-$3$:] $X=\{3\}~\&~Y=\{1,2\}$.
\end{itemize}
These correspond to three different strategies leading to three different correlation $P(b|x)$ 
\footnotesize
\begin{align*}
\mathbb{V}_1= \left(\begin{array}{ccc}
     1  &0&0 \\
    0&   \frac{1}{2}&\frac{1}{2}\\
     0&   \frac{1}{2}&\frac{1}{2}\\
  \end{array}\right),
\mathbb{V}_2= \left(\begin{array}{ccc}
     \frac{1}{2}&0&\frac{1}{2}\\
     0&1&0 \\
     \frac{1}{2}&0&\frac{1}{2}\\
  \end{array}\right),
\mathbb{V}_3=\left(\begin{array}{ccc}
      \frac{1}{2}&\frac{1}{2}&0\\
      \frac{1}{2}&\frac{1}{2}&0\\
        0&0&1 \\
  \end{array}\right).
\end{align*}
\normalsize
Equal mixture of these three strategies, which requires Alice and Bob to share $\log3$-bit of shared randomness, yields the resulting correlation
\begin{align*}
\mathbb{V}=\left(\begin{array}{ccc}
2/3&1/6&1/6\\
1/6&2/3&1/6\\
1/6&1/6&2/3\\
\end{array}\right),
\end{align*}
So the worst case success probability with the assistance of shared randomness is $P_{suc}^W=\frac{2}{3}$\\
{\bf Quantum Strategy}: In case of Quantum strategy Alice can send a bit system to Bob but they are not allowed to share any correlation. Bob will perform a measurement on the quantum system and depending on the outcome of measurement Bob will answer. Alice chooses a symmetric set of pure states lying on an equilateral triangle from a great circle of the Bloch sphere. When the $x$ index is given, she sends the state $\rho_x=\ket{\psi_x}\bra{\psi_x}=\frac{1}{2}\left[\mathbb{I}+\hat{n}_x\cdot \vec{\sigma}\right]$, where  $\hat{n}_x$ is the Bloch vector $\left(\sin\frac{2\pi (x-1)}{3},0,\cos\frac{2\pi (x-1)}{3}\right)^{\mathrm{T}}\in\mathbb{R}^3$; $x\in\{1,2,3\}$ and $\mathrm{T}$ denotes transposition. For decoding, Bob performs the measurement $\mathcal{M}(3)\equiv\left\{\pi_x:=\frac{1}{3}\left[\mathbb{I}+\hat{n}_x\cdot \vec{\sigma}\right]\right\}_{x=1}^3$, and answer x if outcome corresponding to the effect $\pi_x$ is observed. With this strategy we have $p(b=x|x)=\tr[\rho_i\pi_i]=\frac{2}{3},~\forall~x$. So in case of quantum strategy without any preshared correlation the worstcase success probability is given by $P_{suc}^W=\frac{2}{3}$.\\
This simpler game also demonstrate the utility of classical system can be enhanced with finite amount of Shared Randomness.

\section{Conclusions}
Our work brings to the forefront the efficacy of classical shared randomness in empowering the utility of classical communication. Although there has been a lot of research towards exploring the advantages of common-past resources to empower direct-communication resources, most of the investigations have been directed towards studying non-classical resources \cite{Bennett92,Thapliyal99,Frenkel21}. At this point it should be noted that nontrivial utility of classical shared randomness has also been studied in {\it Reverse Shannon Theorem}, both in classical and in quantum scenarios \cite{Winter02,Bennett02,Cubitt11,Bennett14}. In particular, the Ref.\cite{Cubitt11} is worth mentioning, where zero-error communication is studied using one classical channel to simulate another exactly in the presence of various classes of non-signalling correlations between sender and receiver {\it i.e.}, shared randomness, shared entanglement and arbitrary non-signalling correlations. It has been shown that while presence of classical shared randomness can be advantageous over the no correlation scenario, entanglement resource can provide further improvement over shared randomness. This in tern puts hierarchy  among different kinds of additional resources in the form of shared correlations. Important to note that no limitation on the amount of shared correlation is considered in this analysis. Our results are one step forward in this respect as it consider only finite amount of shared randomness which leads to comparison among different amount of same type of shared correlation. For instance, perfect winning of our strict game is not possible with 1 cbit+ 1 bit SR, while $log3$ bit SR along with 1 cbit suffice the purpose.   

The advantage reported here resembles the `quantum superdense coding' phenomenon, but with classical resources only. The crucial distinction is that in quantum superdense coding channel utility is quantified through mutual information, whereas in our case it is quantified through the success probability of winning the Restaurant game introduced here. The advantage is more striking than the one recently reported in \cite{Frenkel21}. There it has been shown that utility of a perfect classical channel can be empowered in assistant with preshared quantum entanglement, a non-classical common-past resource.  

Our proposed games are also important to reveal quantum advantage in the simplest communication scenario. While perfect winning of some games require shared randomness along with $1$-bit of classical communication, in quantum scenario they can be won with $1$-qubit communication alone. Importantly the advantage persists under experimental noises and hence welcomes novel experiments to implement the quantum protocol with presently available quantum technologies. The elimination aspect in our game (not visiting a closed Restaurant) seems to play a crucial role for the quantum advantage. One of the vastly acknowledged recent results in quantum foundation by Pusey-Barrett-Rudolph is worth mentioning here \cite{Pusey12}. This also invokes the power of elimination through entanglement measurement to establish `$\psi$-ontic' behaviour of quantum wave function. Our Restaurant games along with their variant turn out to be useful to put nontrivial `order of merit' among different combinations of communication resources as tabled in Section \ref{OoM}. In fact this observation motivates further research to explore many other such nontrivial ordering relations. An optimistic aim could be to come up with a game which can be won with $1$-qubit communication, whereas in classical domain it requires infinite amount of shared randomness along with $1$-bit of classical communication. The tasks studied in \cite{Galvao03,Perry15} might be the guiding lines at this point.   

The other important aspects are the no-go results established in Section \ref{sec:quantum_no_go}, which prove that an advantage of qubit over classical bit must invoke quantum superposition both at the encoding and decoding steps. At this point it might be interesting to recall an important result of quantum foundation recently established in \cite{Spekkens14}. While Bell in his seminal paper \cite{Bell64} and then Kochen \& Specker \cite{Kochen67} (see also Mermin \cite{Mermin93,Harrigan10}) have shown that for the projective measurements on a two level quantum system it is possible to come up with a `deterministic hidden variable model', in \cite{Spekkens14} it has been shown that such a model is not possible when considering more general POVM measurement. It is worth interesting to study whether the quantum advantage observed here has some deeper connection with this no-go result. The quantum advantages studied here also welcome a re-look to the claim made in \cite{Catani21}. The toy-bit model proposed there, originally motivated from \cite{Spekkens07}, seems incapable of showing the advantages reported in the present work.  

Our proposed game also opens up the opportunity to certify properties of different communication resources. Alice and Bob can certify the presence of direct communication resources by considering the $\mathbb{H}^n(1/n)$ game, as the input of Alice is completely random to Bob without any direct communication. If the direct communication resource is restricted to 1 cbit they will not be able to win $\mathbb{H}^n(1/n)$ game for any odd n without any preshared correlation. This can be useful in a practical scenario to verify the existence of preshared correlation between Alice and Bob by allowing them only 1 cbit direct communication.  \\

Many other interesting questions follow immediately. For instance, it is natural to ask whether shared randomness can empower other direct-Communication resources other than the classical channel. Is it possible to come with a communication scenario where the utility establishes some trade-off between classical communication and shared randomness. Lastly, our work highlights the fact that whether or not a Common-past resource empowers a direct-communication resource depends on the scenario under consideration, \textit{i.e.,} it depends on which of the inputs/outputs are trivial. A lot more interesting situations arise in the multipartite scenario since there are many more possibilities in which the inputs/outputs of a subset of parties can be made trivial. 
\appendix
\section{Quantum \& post-quantum enhancement of classical channel}\label{appendix-a}
In a recent work \cite{Frenkel21}, the authors have reported very interesting results. They have show that preshared entanglement can empower communication utility of a perfect classical channel. In fact post quantum correlation can do better than quantum correlations. They have shown it through the success probability of a two party game. Here, for the sake of completeness, we  review their game and show that the success probability scales with the well known CHSH expression. 

{\bf $4$-cup \& $2$-ball game:} Assume that there are $4$ cups and a referee randomly chooses two of them and put $1$ ball in each of them. Alice knows which two cups contain the balls but this information is not known to Bob. Bob needs to pick a cup with ball. If Alice is allowed to communicate $1$ C-bit to Bob then maximum success probability of Bob's winning turns out to be $5/6$ \cite{Frenkel21}. In the following we analyze the success probability when $1$ C-bit communication is assisted with no signaling (NS) correlation.    

{\bf NS correlation:} A two-party two-input two-output NS probability distribution can be represented as the following correlation table:
\footnotesize
\begin{align}
\mathbb{P}^{xy}_{ab}&\equiv\begin{blockarray}{ccccc}
xy/ab& 00 & 01 & 10 & 11 \\
\begin{block}{c(cccc)}
00 & p(00|00) & p(01|00) & p(10|00) & p(11|00) \\
01 & p(00|01) & p(01|01) & p(10|01) & p(11|01) \\
10 & p(00|10) & p(01|10) & p(10|10) & p(11|10) \\
11 & p(00|11) & p(01|11) & p(10|11) & p(11|11) \\
\end{block}
\end{blockarray}\\
&\equiv
\begin{pmatrix}
c_{00} & m_0-c_{00} & n_0-c_{00} & 1-m_0-n_0+c_{00} \\
c_{01} & m_0-c_{01} & n_1-c_{01} & 1-m_0-n_1+c_{01} \\
c_{10} & m_1-c_{10} & n_0-c_{10} & 1-m_1-n_0+c_{10} \\
c_{11} & m_1-c_{11} & n_1-c_{11} & 1-m_1-n_1+c_{11} \\
\end{pmatrix},
\end{align}
\normalsize
where $\max\{0,m_i+n+j-1\}\le c_{ij}\le\min\{m_i,n_j\}~\forall~i,j\in\{0,1\}$. 

{\bf Strategy:} Given $1$ bit of classical communication assisted with the aforesaid NS correlation, Alice will go by the following strategies depending on which of the cups combination is realized.  
\begin{table}[H]
\setlength
\extrarowheight{5pt}
\centering
\begin{tabular}{c| c} 
\hline
Cups & Alice's action\\
\hline\hline
$12$ & sends $0$ to Bob\\
\hline
$34$ & sends $1$ to Bob\\
\hline
$13$ & input $x=0$ and sends outcome $a$ to Bob\\
\hline
$24$ & input $x=0$ and sends outcome $a\oplus1$ to Bob\\
\hline
$14$ & input $x=1$ and sends outcome $a$ to Bob\\
\hline
$23$ &input $x=1$ and sends outcome $a\oplus1$ to Bob\\
\hline\hline
\end{tabular}
\end{table}
Now, depending on the bit received from Alice, Bob chooses his input for the NS box and depending on the outcome received from it he chooses the cup as follows:
\begin{center}
\begin{tabular}{c| c|c} 
\hline
Bit received & Bob's Action & Choice's of the cup\\
\hline\hline
\multirow{2}{3em}{~~~~$0$}& \multirow{2}{6em}{input $y=0$} & chooses cup $1$ for $b=0$\\
&  & chooses cup $2$ for $b=1$\\
\hline
\multirow{2}{3em}{~~~~$1$}& \multirow{2}{6em}{input $y=1$} & chooses cup $3$ for $b=0$\\
&  & chooses cup $4$ for $b=1$\\
\hline\hline
\end{tabular}
\end{center}
For the above strategy, the for pair of cups combinations the success probabilities become 
\begin{align*}
P_{succ}^{12}&=p(b=0|y=0)+p(b=1|y=0)=1,\\
P_{succ}^{34}&=p(b=0|y=1)+p(b=1|y=1)=1,\\
P_{succ}^{13}&=p(00|00)+p(10|01),
P_{succ}^{24}=p(11|00)+p(01|01),\\
P_{succ}^{14}&=p(00|10)+p(11|11),
P_{succ}^{23}=p(11|10)+p(00|11).
\end{align*}
Therefore, we have average success,
\begin{align}
P_{succ}&=\frac{1}{6}\left(P_{succ}^{12}+P_{succ}^{34}+P_{succ}^{13}+P_{succ}^{24}+P_{succ}^{13}+P_{succ}^{23}\right)\nonumber\\
&=\frac{1}{6}[5+2(c_{00}-c_{01}+c_{10}+c_{11}-m_1-n_0)].
\end{align}
Again, for the NS correlation the CHSH expression reads as
\begin{align}
\mathbb{CHSH}&=\langle x_0y_0\rangle-\langle x_0y_1\rangle+\langle x_1y_0\rangle+\langle x_1y_1\rangle\nonumber\\
&=2+4(c_{00}-c_{01}+c_{10}+c_{11}-m_1-n_0).
\end{align}
Here, $\langle x_iy_j\rangle:=\sum_{a,b=\pm}abp(ab|x_iy_j)$.
Thus, we have
\begin{align}
P_{succ}&=\frac{1}{12}(8+\mathbb{CHSH}). 
\end{align}
Clearly this success value is strictly greater than $1$-bit success value (i.e. $5/6$) wherever $\mathbb{CHSH}>2$. In other words, all NS correlation that are CHSH nonlocal (i.e. violate CHSH inequality) are advantageous in $4$-cup \& $2$-ball game and moreover post-quantum correlation yield better than quantum success.

\section{Playing $\mathbb{H}^3(\gamma_1,\gamma_2,\gamma_2)$ with $\mathcal{P}_{ly}(4)$}\label{gen-3-Restaurant with square}
For this generic case, Alice applies the encoding
\begin{align*}
\mathbb{E}^{(12)}\equiv\begin{cases}
1\mapsto\omega_1,~~2\mapsto\omega_4,\\
3\mapsto s:=q\times\omega_2 +(1-q)\times\omega_3.
\end{cases}
\end{align*}
Superscript in $\mathbb{E}^{(12)}$ denotes the fact that Restaurant $1$ and $2$ are encoded in extreme states, while some mixed state is used for the Restaurant $3$. Bob performs the measurement $\mathcal{M}\equiv\left\{p e_1,~(1-p) e_2,~p e_3,~(1-p) e_4\right\}$, and visits the Restaurants as follows 
\begin{figure}[t!]
\centering
\includegraphics[width=0.45\textwidth]{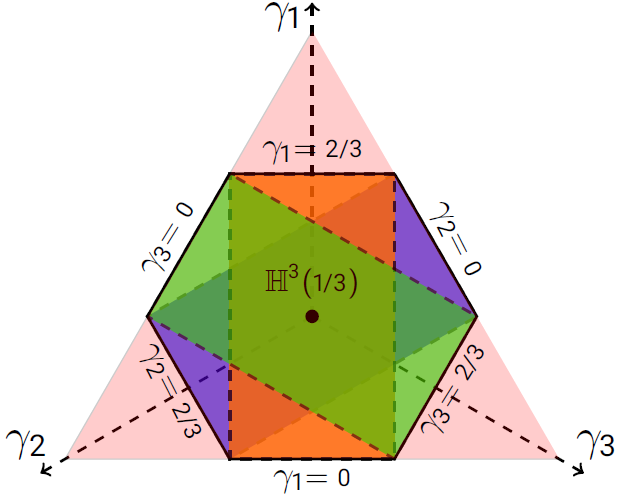}
\caption{Green rectangle consists of the games $\mathbb{H}^3(\gamma_1,\gamma_2,\gamma_3)$ that are winnable with encoding  $\mathbb{E}^{(12)}$. Violet and orange are the games winnable with encodings $\mathbb{E}^{(13)}$ and $\mathbb{E}^{(23)}$, respectively.}\label{fig13}
\end{figure}
\begin{align*}
\mathbb{D}^{(12)}\equiv\begin{cases}
p e_1\mapsto 3,\\
(1-p)e_2\mapsto2,\\
(1-p) e_4\mapsto 1,\\
p e_3\mapsto\begin{cases}
1~\mbox{with probability}~ r,\\
2~\mbox{with probability}~ (1-r).
\end{cases}
\end{cases}
\end{align*}  
Here, the superscript in $\mathbb{D}^{(12)}$ is used to indicate that this particular decoding is associated with the encoding $\mathbb{E}^{(12)}$. A straightforward calculation yields
\begin{align*}
p(1|2)&=p(2|1)=1-p,~~p(3|1)= p(3|2)=p,\\
p(1|3)&=p\times r+(1-p)\times (1-q),\\
p(2|3)&=(1-p)\times q+p\times (1-r).
\end{align*}
These subsequently gives us $\gamma_1=\frac{1}{3}[(1-p)+p\times r+(1-p)\times (1-q)]$,  $\gamma_2=\frac{1}{3}[(1-p)+(1-p)\times q+p\times (1-r)]$, and  $\gamma_3=\frac{2p}{3}$. Using the encoding strategy $\mathbb{E}^{(12)}$ with varying $p,q,r\in[0,1]$, the class of games highlighted in the green rectangle in Fig.\ref{fig13} can be won perfectly, which is a proper subset of all games. However, if we consider two other encoding-decoding strategies $\left(\mathbb{E}^{(13)},\mathbb{D}^{(13)}\right)$ and $\left(\mathbb{E}^{(23)},\mathbb{D}^{(23)}\right)$ with
\begin{align*}
\mathbb{E}^{(13)}&\equiv\begin{cases}
1\mapsto\omega_1,~~
3\mapsto\omega_4,\\
2\mapsto s:=q\times\omega_2 +(1-q)\times\omega_3.
\end{cases}\\
\mathbb{D}^{(13)}&\equiv\begin{cases}
p e_1\mapsto 2,\\
(1-p)e_2\mapsto 3,\\
(1-p) e_4\mapsto 1,\\
p e_3\mapsto\begin{cases}
1~\mbox{with probability}~ r,\\
3~\mbox{with probability}~ (1-r).
\end{cases}
\end{cases}
\end{align*}
\begin{align*}
\mathbb{E}^{(23)}&\equiv\begin{cases}
2\mapsto\omega_1,~~3\mapsto\omega_4,\\
1\mapsto s:=q\times\omega_2 +(1-q)\times\omega_3.
\end{cases}\\
\mathbb{D}^{(23)}&\equiv\begin{cases}
p e_1\mapsto 1,\\
(1-p)e_2\mapsto 3,\\
(1-p) e_4\mapsto 2,\\
p e_3\mapsto\begin{cases}
2~\mbox{with probability}~ r,\\
3~\mbox{with probability}~ (1-r).
\end{cases}
\end{cases}
\end{align*}
then it turns out that all the games $\mathbb{H}^3(\gamma_1,\gamma_2,\gamma_3)$ are perfectly winnable with communication of $\mathcal{P}_{ly}(4)$ system without requiring any shared randomness. 
\section{Classical correlated strategy for $\mathbb{H}^n[1/(n-1)]$}\label{ccsfor strict}
$\mathbf{\mathbb{H}^4[1/3]}:$ Although the game $\mathbb{H}^4[1/3]$ cannot be won with $1$-bit+$1$-SR resource, here we will show that the goal can be perfectly achieved through a $1$-bit classical channel if more shared randomness are provided as assistance. The strategy is described below.  

Alice divides the Restaurants into two disjoint partitions $X$ and $Y$, such that $ |X|=|Y|=2$. Alice sends $0$ to Bob whenever a Restaurant in $X$ is closed, otherwise she sends $1$. Bob visits the Restaurants within the set $Y$ ($X$) with uniform probability if he receives $0~(1)$ from Alice. Clearly,the condition ($\mathrm{h}_s1$) is satisfied as bob never visits a closed Restaurant. Consider the following three partitionings $X$ and $Y$.
\begin{itemize}
\item[C-$1$:] $X=\{1,2\}~\&~Y=\{3,4\}$,
\item[C-$2$:] $X=\{1,3\}~\&~Y=\{2,4\}$,
\item[C-$3$:] $X=\{1,4\}~\&~Y=\{2,3\}$.
\end{itemize}
These correspond to three different strategies leading to three different visit matrices 
\footnotesize
\begin{align*}
\mathbb{V}_1= \left(\begin{array}{cccc}
     0  &0&\frac{1}{2}&\frac{1}{2} \\
    0&0&    \frac{1}{2}&\frac{1}{2}\\
      \frac{1}{2}&\frac{1}{2}&0&0\\
         \frac{1}{2}&\frac{1}{2} & 0&0
  \end{array}\right),
\mathbb{V}_2= \left(\begin{array}{cccc}
      0&\frac{1}{2}&0&\frac{1}{2} \\
       \frac{1}{2}&0&\frac{1}{2}&0\\
     0&\frac{1}{2}&0&\frac{1}{2}\\
         \frac{1}{2}&0&\frac{1}{2} & 0
  \end{array}\right),
\mathbb{V}_3=\left(\begin{array}{cccc}
     0  & \frac{1}{2}&\frac{1}{2}&0 \\
        \frac{1}{2}&0&0&\frac{1}{2}\\
     \frac{1}{2}&0&0&\frac{1}{2}\\
         0&\frac{1}{2}&\frac{1}{2} & 0
  \end{array}\right).
\end{align*}
\normalsize
Equal mixture of these three strategies, which requires Alice and Bob to share $\log3$-bit of shared randomness, yields the resulting visit matrix 
\begin{align*}
\mathbb{V}=\left(\begin{array}{cccc}
0&1/3&1/3&1/3\\
1/3&0&1/3&1/3\\
1/3&1/3&0&1/3\\
1/3&1/3&1/3&0
\end{array}\right),
\end{align*}
which satisfies both the conditions ($\mathrm{h}_s1$) and ($\mathrm{h}_s2$).

$\mathbf{\mathbb{H}^n[1/(n-1)]}:$ The above protocol can be generalized for  $\mathbb{H}^{n}[1/(n-1)]$ game with $1$-bit of classical communication assisted with $\log (n-1)$-bit of SR. Alice sends $0$ and $1$ to direct Bob to visit $k$-th and $(k+1)$-th Restaurant respectively, where $k\in\{1,2,\cdots,(n-1)\}$. The value of the $k$ will be identified by the outcomes $\{1,2,\cdots,(n-1)\}$ of the SR, so it requires $\log (n-1)$-bit of SR. Whenever the $m$-th Restaurant ($m\in\{2,3,\cdots,(n-1)\}$) is closed, Alice communicates $0$ and $1$ respectively for every $k\in\{1,\cdots,(m-1)\}$-th and $k\in\{m,\cdots,(n-1)\}$-th outcomes of the SR. On the other hand, if the $1$-st or, $n$-th Restaurant is closed, then Alice will communicate $0$ and $1$ respectively, independent of the SR outcomes. It is easy to verify that the strategy satisfies both the required conditions. 

At this point it is important to see that a $\mathbb{H}^{n}[1/(n-1)]$ game can not be won with a qubit communication alone, whenever $n\ge5$. This establishes the `order of merit' Q$\prec_{inst}$C+SR as listed in Table \ref{table5}.

{\bf ACKNOWLEDGMENTS:} SGN acknowledges support from the CSIR project-\\ 09/0575(15951)/2022-EMR-I. EPL acknowledges support from the FWO through the BeQuNet SBO project
S008323N. TG is supported by the Hong Kong Research Grant
Council through Grants No. 17307719 and 17307520 and
though the Senior Research Fellowship Scheme SRFS2021-
7S02, by the Croucher Foundation, and by the John Templeton
Foundation through Grant No. 62312, The Quantum Information Structure of Space-time (qiss.fr). MA and MB acknowledge funding from the National Mission in Interdisciplinary Cyber-Physical systems from the Department of Science and Technology through the I-HUB Quantum Technology Foundation (Grant no: I-HUB/PDF/2021-22/008). MB acknowledges support through the research grant of INSPIRE Faculty fellowship from the Department of Science and Technology, Government of India, and the start-up research grant from SERB, Department of Science and Technology (Grant no: SRG/2021/000267).

\end{document}